\DeclareMathOperator*{\defeq}{\triangleq}
\newtheorem{theorem}{Theorem}
\newtheorem{lemma}{Lemma}
\newcommand{\bit}{\begin{itemize}}
\newcommand{\eit}{\end{itemize}}
\newcommand{\bc}{\begin{center}}
\newcommand{\ec}{\end{center}}
\newcommand{\ba}{\begin{array}}
\newcommand{\ea}{\end{array}}
\newcommand{\beq}{\begin{equation}}
\newcommand{\eeq}{\end{equation}}
\newcommand{\beqn}{\begin{equation*}}
\newcommand{\eeqn}{\end{equation*}}
\newcommand{\bean}{\begin{eqnarray*}}
\newcommand{\eean}{\end{eqnarray*}}
\newcommand{\bea}{\begin{eqnarray}}
\newcommand{\eea}{\end{eqnarray}}
\def\E{\mathbb{E}}
\newcommand{\Lc}{{\mathcal L}}
\newcommand{\Rc}{{\mathcal R}}
\newcommand{\Sc}{{\mathcal S}}
\newcommand{\Zc}{{\mathcal Z}}
\newcommand{\non}{\nonumber}
\newcommand{\Hen}{\mathbb{H}}
\newcommand{\hen}{\mathrm{h}}
\newcommand{\Imu}{\mathbb{I}}
\newcommand{\bln}{n}
\newcommand{\Ho}{\mathcal{H}_{\text{out}}}
\newcommand{\Hob}{\bar{\mathcal{H}}_{\text{out}}}
\begin{document}
\sloppy

\title{Adding a Helper Can  Totally Remove the Secrecy Constraints in Interference Channel}
\author{Jinyuan Chen and Fan Li
\thanks{Jinyuan Chen and Fan Li are with Louisiana Tech University, Department of Electrical Engineering, Ruston, LA 71272, USA (emails: jinyuan@latech.edu, fli005@latech.edu).  
} 
}

\maketitle
\pagestyle{headings}

\begin{abstract}

In many  communication channels,  secrecy constraints \emph{usually} incur a penalty in capacity, as well as generalized degrees-of-freedom (GDoF).  
In this work we show  an interesting observation that, adding a helper can \emph{totally} remove the   penalty in sum GDoF, for a two-user symmetric Gaussian interference channel.
For the interference channel where each transmitter sends a message to an intended receiver without secrecy constraints,  the sum GDoF  is a  well-known ``W'' curve, characterized by Etkin-Tse-Wang in 2008. 
If the secrecy constraints are imposed on this interference channel, where the message of each transmitter must be secure from the unintended receiver (eavesdropper),  then a GDoF penalty is incurred  and the secure sum GDoF is  reduced to a modified ``W'' curve, derived by Chen recently.
In this work we show that, by adding a helper into this interference channel with secrecy constraints, the \emph{secure} sum GDoF turns out to be a ``W'' curve, which is  the same as the sum GDoF of the setting without secrecy constraints.
The proposed scheme is based on the cooperative jamming and a careful signal design such that  the jamming signal of  the helper  is aligned at a specific direction and power level with the information signals of the transmitters, which allows to totally remove the penalty in GDoF due to the secrecy constraints. 
Furthermore, the estimation approaches of noise removal and signal separation due to the rational independence are used in the secure rate analysis.

\end{abstract}

\section{Introduction}

Since Shannon's  seminal work  in  1949  \cite{Shannon:49}, information-theoretic secrecy has been studied for many years in many communication channels, such as wiretap channels \cite{Wyner:75, CsiszarKorner:78, CH:78},  broadcast  channels \cite{LMSY:08, KTW:08, XCC:09, LLPS:10},    multiple access channels   
 and interference channels 
 \cite{TY:08cj,  LP:08,  TekinYener:08d, HY:09, KG:15, HKY:13, LMSY:08,LBPSV:09,PDT:09, LYT:08,YTL:08, KGLP:11, LLP:11,XU:14, MDHS:14, MM:14o, XU:15,  MM:16, GTJ:15, GJ:15, allerton:16,  MXU:17, ChenAllerton:18, ChenIC:18}. 
In the pioneer work by Wyner in 1975 \cite{Wyner:75},  the notion of secure capacity was introduced via  a  wiretap channel with secrecy constraint. For this wiretap channel,  Wyner showed that there is a penalty in capacity due to the secrecy constraint.
Later on, it has been shown that this insight also holds true for the other channels 
(see, e.g., \cite{LMSY:08, LBPSV:09, HKY:13, KGLP:11, XU:14, XU:15, MM:16,GTJ:15, ChenAllerton:18}).

This work focuses on the secure sum generalized degrees-of-freedom (GDoF, a capacity approximation) of a  two-user symmetric Gaussian interference channel, and shows an interesting observation that the penalty in GDoF due to the secrecy constraints can be \emph{totally} removed by adding a helper.  For the symmetric Gaussian interference channel \emph{without} secrecy constraints, the sum GDoF is characterized as a well-known ``W'' curve (cf.~\cite{ETW:08}).  If the secrecy constraints are imposed on this interference channel, where the message of each transmitter must be secure from the unintended receiver (eavesdropper),  the work in \cite{ChenIC:18} showed that the secure sum GDoF is then reduced to a modified ``W'' curve (see Fig.~\ref{fig:IChGDoF}). It reveals that there is a penalty in GDoF in a large regime due to the secrecy constraints.  Interestingly, this work shows that  adding a helper can totally remove the penalty in GDoF due to the secrecy constraints (see Fig.~\ref{fig:IChGDoF}).  In other words,  by adding a helper, the secure sum GDoF of the interference channel derived in this work  is exactly the same as the sum GDoF  of the interference channel \emph{without} secrecy constraints.

The proposed scheme is based on cooperative jamming  and signal  alignment.  Specifically, the helper sends a cooperative jamming signal  at a specific direction and power level to confuse the eavesdroppers, while keeping legitimate receivers' abilities to decode their desired messages.  
The role of helper(s) in secure communications has been studied extensively in the literature (see \cite{TY:08cj,LMSY:08, XU:14, XU:15, FW:16} and references therein).  The helper can be the node transmitting its own confidential message (cf.~\cite{TY:08cj, LMSY:08, XU:14, XU:15}), as well as the node without transmitting any message  (cf.~\cite{ XU:14,FW:16}). 
The helper considered in our setting refers to the latter case, which is able to totally remove the penalty in GDoF due to the secrecy constraints.
For the proposed scheme,  the estimation approaches of noise removal and signal separation due to  rational independence  (cf.~\cite{MGMK:14}) are used in the secure rate analysis.

\begin{figure}[t!]
\centering
\includegraphics[width=8cm]{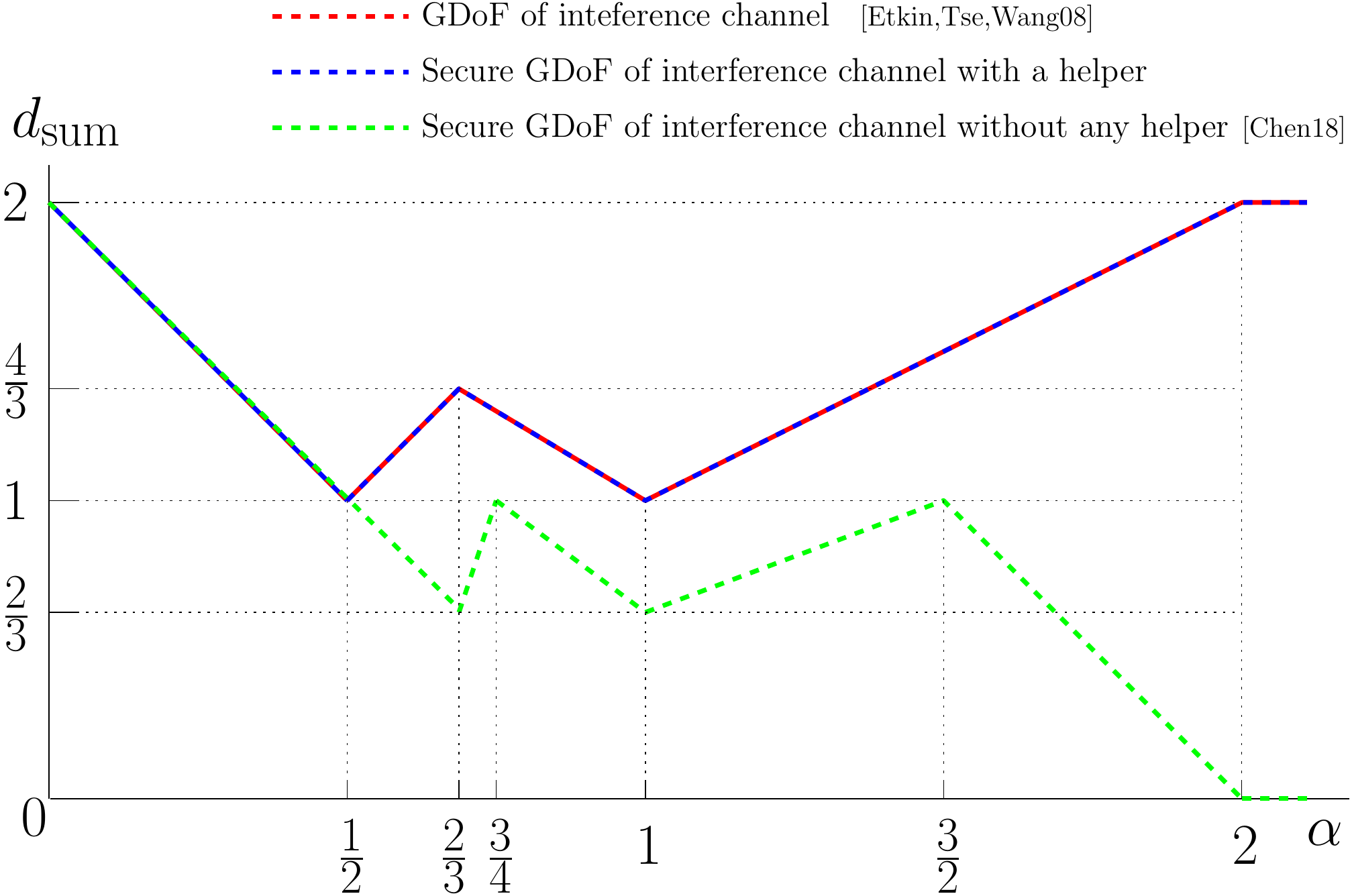}
\caption{Secure sum GDoF vs. $\alpha$ for the two-user \emph{symmetric} Gaussian interference channel with a helper, where $\alpha$ corresponds to the interference-to-signal ratio (all link strengths in decibel scale).   Note that  the secure sum GDoF of the interference channel with a helper, derived in this work (cf.~Theorem~\ref{thm:IChGDoF}),  is the same as the sum GDoF  of the interference channel \emph{without} secrecy constraints (cf.~\cite{ETW:08}). Without any helper, there is a penalty in GDoF due to the secrecy constraints (cf.~\cite{ChenIC:18}).}
\label{fig:IChGDoF}
\end{figure}

The remainder of this work is organized as follows. 
Section~\ref{sec:system} introduces the  system model.  
Section~\ref{sec:mainresult} presents the  main result of this work. 
Section~\ref{sec:example}  describes the proposed  scheme via an  example. 
The achievability proof  is given in  Sections~\ref{sec:CJGau}-\ref{sec:rateerror341}  and the appendices, while the converse proof is given in Section~\ref{sec:converse}.
Section~\ref{sec:conclusion} concludes this work.  
Throughout this work, $\Imu(\bullet)$, $\Hen(\bullet)$ and $\hen(\bullet)$ denote the mutual information, entropy and differential entropy,  respectively.  
     $\Zc$,  $\Zc^+$  and $\Rc$  denote the sets of integers, positive integers  and real numbers, respectively.   
        $o(\bullet)$ comes from the standard Landau notation, where  $f(x)=o(g(x))$ implies that $\lim_{x \to \infty} f(x)/g(x) =0$.  
Logarithms are in base~$2$.

\section{System model   \label{sec:system} }

In this setting, we consider a two-user Gaussian interference channel with a helper and confidential messages (see Fig.~\ref{fig:IChelper}). The output of the channel  at receiver~$k$ at time~$t$ is 
\begin{align}
y_{k}(t) &= \sum_{\ell=1}^{3}  2^{m_{k\ell}} h_{k\ell} x_{\ell}(t) +z_{k}(t), \quad \quad k=1,2  \label{eq:channelG101} 
\end{align}
$t=1,2, \cdots, \bln $, where  $x_{\ell}(t)$ is the normalized channel input  at transmitter~$\ell$ under the  power constraint  $\E |x_{\ell}(t)|^2 \leq 1$; $z_k(t) \sim \mathcal{N}(0, 1)$ is the additive white Gaussian noise at receiver~$k$;  $m_{k\ell}$ is a nonnegative integer; 
and  $h_{k\ell} \in (1, 2]$ is a normalized channel coefficient,  for $k =1,2$ and  $\ell = 1,2, 3$.  In our setting, transmitter~3 is the helper. 
By following the convention in \cite{ChenIC:18}, we  let   $P \defeq \max_{k}\{ 2^{2 m_{kk}} \}$, and define  
\begin{align}
\alpha_{k\ell} &\defeq  \frac{\log 2^{m_{k\ell}}}{ \frac{1}{2}\log P}  \quad \quad   k =1,2, \quad  \ell = 1,2, 3.  \label{eq:alpha11} 
\end{align}
Then, we can rewrite the channel model in \eqref{eq:channelG101} as
\begin{align}
y_{k}(t) &= \sum_{\ell =1}^{3}  \sqrt{P^{\alpha_{k\ell}}} h_{k\ell} x_{\ell}(t) +z_{k}(t), \quad \quad k=1,2  \label{eq:channelG} 
\end{align}
where  $\alpha_{k\ell} \geq 0$ denotes the \emph{channel strength} of the link between transmitter~$\ell$ and receiver~$k$. 
In the rest of this work, we will consider the channel model in \eqref{eq:channelG}. 
It is assumed that all the channel parameters $\{\alpha_{k\ell}, h_{k\ell}  \}_{k, \ell}$  are available at each node. 
For the \emph{symmetric} case, it is assumed that 
\[ \alpha_{11}= \alpha_{22}=1, \quad  \alpha_{21}= \alpha_{12}= \alpha_{13}= \alpha_{23} = \alpha,  \quad   \alpha \geq 0.\]

\begin{figure}[t!]
\centering
\includegraphics[width=7cm]{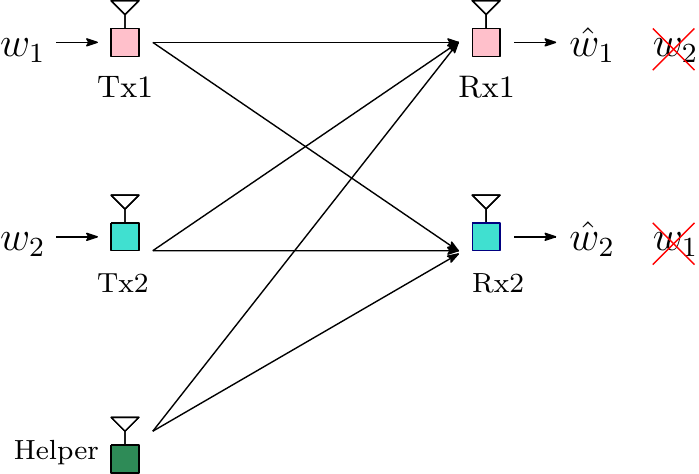}
 \vspace{-.05 in}
\caption{Two-user interference channel with a helper and confidential messages.}
\label{fig:IChelper}
\end{figure}

In this interference channel, a confidential message $w_k$  is sent from transmitter~$k$  to  receiver~$k$,  where  $w_k$  is  uniformly chosen from a set $\mathcal{W}_k \defeq \{1,2,\cdots, 2^{\bln R_k}\}$, for $k=1,2$. 
At  transmitter~$k$, a  stochastic function 
\[f_k: \mathcal{W}_k  \times   \mathcal{W}'_{k}   \to   \mathcal{R} ^{\bln}, \quad  \quad k=1,2\] is used to map the message $w_k \in \mathcal{W}_k$  to a transmitted codeword  $ x_k^{\bln}  = f_k(w_k, w'_k )   \in  \mathcal{R} ^{\bln}$, where $w'_k \in  \mathcal{W}'_k$  denotes the randomness in this mapping, and $w'_k$ is available at transmitter~$k$ only. 
At transmitter~$3$ (the helper), the following function \[f_3:   \mathcal{W}'_{3}   \to   \mathcal{R} ^{\bln} \] is used to generate $x_3^{\bln}  = f_3(w'_k ) $, where the random variable $w'_3 \in \mathcal{W}'_3$ is available at transmitter~$3$ only.
We assume that the random variables $\{w_1, w'_1 , w_2, w'_2 ,   w'_3\}$ are mutually independent. 
A secure rate pair $(R_1, R_2)$ is said to be achievable  if for any $\epsilon >0$ there exists a sequence of $\bln$-length codes such that each receiver can decode its own message reliably 
and the messages are kept secret such that 
 \begin{align}
 \Imu(w_1; y_{2}^{\bln})  \leq  \bln \epsilon , \quad
\Imu(w_2; y_{1}^{\bln})   \leq  \bln \epsilon.  
 \end{align} 
The secure capacity region, denoted by $C$,  is the closure of the set of all achievable  $(R_1,  R_2)$ secure rate pairs.
We define the secure sum capacity as: 
 \begin{align}
 C_{\text{sum}} \defeq \sup \big\{ R_1 + R_2 |  \  (R_1, R_2) \in C  \big\} .  \label{eq:defGDoFsum}
  \end{align}
We also define the secure sum GDoF  as   
 \begin{align}
 d_{\text{sum}}  \defeq   \lim_{P \to \infty}   \frac{C_{\text{sum}}}{ \frac{1}{2} \log P}.  \label{eq:defGDoF}
 \end{align}
Note that the GDoF  is  a  form of capacity approximation. It is more general than degrees-of-freedom (DoF),  as the latter considers only a specific case with $\alpha_{k\ell} =1,  \forall k, \ell $.

\section{Main result  \label{sec:mainresult}}

This section provides   the  main result of this work,  for the secure communication over a  two-user symmetric  Gaussian interference channel with a helper.

\begin{theorem}  \label{thm:IChGDoF}
Considering the two-user symmetric  Gaussian interference channel with a helper defined in Section~\ref{sec:system}, for almost all the channel coefficients  $\{h_{k\ell}\} \in (1, 2]^{2\times 3}$,  the optimal secure sum GDoF  is characterized as 
\begin{subnumcases}
{ d_{\text{sum}}  =} 
     2(1- \alpha)    &    for   \ $ 0 \leq \alpha \leq  \frac{1}{2}$            			\label{thm:GDoFICh1} \\
     2\alpha  &  for \ $\frac{1}{2}  \leq \alpha \leq  \frac{2}{3}$           		\label{thm:GDoFICh2} \\ 
        2(1 -  \alpha / 2)  &  for  \   $\frac{2}{3}  \leq \alpha \leq  1$    			\label{thm:GDoFICh3} \\ 
            \alpha  &  for  \  $1  \leq  \alpha \leq   2$    							\label{thm:GDoFICh4} \\ 
                                               2  &  for  \   $  \alpha  \geq 2$ .  				\label{thm:capacitydet5}
\end{subnumcases}
\end{theorem}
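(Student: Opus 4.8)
The plan is to prove the two directions separately: the converse $\dsum \le W(\alpha)$, where $W(\alpha)$ denotes the right-hand side of \eqref{thm:GDoFICh1}--\eqref{thm:capacitydet5}, and the matching achievability with secrecy and a helper.

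For the converse, I would exploit that the helper's transmit signal $x_3^{\bln} = f_3(w_3')$ is a function of the private randomness $w_3'$ alone, which is independent of $(w_1, w_1', w_2, w_2')$. Hence, revealing $w_3'$ to both legitimate receivers as genie side information cannot decrease any achievable rate: receiver~$k$ can reconstruct $x_3^{\bln}$ and subtract $\sqrt{P^{\alpha_{k3}}} h_{k3} x_3^{\bln}$ from $y_k^{\bln}$, leaving exactly the output of a two-user symmetric Gaussian interference channel \emph{without} a helper. Dropping the secrecy constraints can only enlarge the achievable region, so $C_{\text{sum}}$ of our channel is upper bounded by the (non-secure) sum capacity of that two-user interference channel, whose GDoF is the ``W'' curve of Etkin--Tse--Wang \cite{ETW:08}. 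This gives $\dsum \le W(\alpha)$ for every $\alpha$, and it remains only to achieve it.

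The achievability is the heart of the matter: one must recover the full ``W'' curve after the secrecy constraints are switched on. I would use structured (lattice/PAM) codebooks together with real interference alignment and cooperative jamming \cite{TY:08cj, MGMK:14}. In each regime of $\alpha$, $x_1$ and $x_2$ are built as superpositions of lattice layers placed at prescribed power levels (mirroring the private/common power splitting that attains the ETW per-user GDoF $d(\alpha) = \tfrac12 W(\alpha)$), while the helper sends a jamming signal $x_3$ whose lattice layers and power level are chosen so that, \emph{at each eavesdropping receiver}, the jamming arrives aligned in direction and power with exactly the portion of the interfering confidential signal that would otherwise leak. Since each helper-to-receiver link has the same strength $\alpha$ as the cross links, this alignment in power level is feasible, and scaling the helper's lattice by the appropriate ratio $h_{k\ell}/h_{k3}$ of channel coefficients makes the aligned sum (confidential layer $+$ jamming layer) essentially independent of the confidential message, so the jamming behaves as an asymptotic one-time pad. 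I would verify this design in the five pieces of the curve — $0\le\alpha\le\tfrac12$ (private-only, with jamming only needed to sanitize the residual $O(1)$ leakage), $\tfrac12\le\alpha\le\tfrac23$, $\tfrac23\le\alpha\le1$, $1\le\alpha\le2$ (successive-decoding / very-strong-interference style), and $\alpha\ge2$ — using the example of Section~\ref{sec:example} as the template; the grouping follows the breakpoints of $W(\alpha)$.

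Two things must then be checked. First, \emph{decodability at each legitimate receiver}: receiver~$k$ sees its desired layers together with the (rationally independent) interference and jamming layers, and for almost all $\{h_{k\ell}\}\in(1,2]^{2\times3}$ these layers land on distinct points of a rescaled integer lattice (a Khintchine--Groshev / Roth-type rational-independence argument); the desired codeword is then recovered through the noise-removal and signal-separation estimates of \cite{MGMK:14}, giving each user GDoF $d(\alpha)$ and hence $\dsum = 2 d(\alpha) = W(\alpha)$. Second, \emph{secrecy}: one must show $\Imu(w_1; y_2^{\bln}) \le \bln\epsilon$ (and symmetrically for $w_2$), which I would obtain by writing $\Imu(w_1; y_2^{\bln}) = \Hen(w_1) - \Hen(w_1 \mid y_2^{\bln})$ and lower bounding the equivocation via the one-time-pad structure of the aligned jamming, after arguing that from $y_2^{\bln}$ the eavesdropper can at best resolve the aligned sum up to noise. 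I expect this secrecy step to be the main obstacle: it requires making the ``one-time pad'' rigorous at the level of lattice points (uniformity of the jamming layer over the relevant lattice, control of boundary effects and of the non-aligned cross terms) and doing so uniformly over the almost-everywhere-good set of coefficients, which must simultaneously avoid the measure-zero exceptional sets arising from both the decodability and the secrecy analyses. The final leakage/rate accounting then shows that the GDoF penalty of \cite{ChenIC:18} is exactly cancelled by the helper.
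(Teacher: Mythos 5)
Your converse is correct and is essentially the paper's: you remove the helper by a genie (reveal $w_3'$, subtract $x_3^{\bln}$) and then drop secrecy, arriving at the ETW ``W'' curve; the paper does the two reductions in the opposite order ($C \subseteq C_{h,\mathrm{ns}} \subseteq C_{nh,\mathrm{ns}}$), with your genie argument being exactly the justification for the second inclusion. Your achievability scheme (PAM layers at ETW-style power levels, helper's jamming layer aligned in direction and power with the cross-link common layer at each receiver, decodability via noise removal and rational-independence signal separation on an almost-everywhere-good set of coefficients) also matches the paper, regime by regime. One small point: for $0\le\alpha\le\tfrac12$ the paper uses \emph{no} helper and \emph{no} jamming at all, since the secure GDoF of \cite{ChenIC:18} already equals $2(1-\alpha)$ there; there is no residual leakage to sanitize.

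The genuine gap is in how you propose to close the secrecy side. You plan to prove $\Imu(w_1; y_2^{\bln}) \le \bln\epsilon$ directly by lower-bounding $\Hen(w_1\mid y_2^{\bln})$ via a ``one-time pad'' argument at the lattice level, and you correctly flag this as the main obstacle: making uniformity of the jamming layer, boundary effects, and the non-aligned cross terms rigorous over $n$ uses is exactly what is hard, and your worry about intersecting the decodability and secrecy exceptional sets is a symptom of choosing this route. The paper avoids all of this by invoking the known wiretap-style achievability result with stochastic encoding and confusion messages (\cite[Thm.~2]{XU:15}, \cite[Thm.~2]{LMSY:08}): with a codebook $\{v_k^{\bln}(w_k,w_k')\}$ and rates $R_k=\Imu(v_k;y_k)-\Imu(v_k;y_\ell\mid v_\ell)-\epsilon$, $R_k'=\Imu(v_k;y_\ell\mid v_\ell)-\epsilon$, secrecy is guaranteed \emph{for free} by the theorem, and the entire burden collapses to two single-letter bounds: a lower bound on $\Imu(v_k;y_k)$ via the decoding-error estimates you already have, and an upper bound on the leakage penalty $\Imu(v_k;y_\ell\mid v_\ell)$. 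The alignment $v_{1,c}+u_3$ then yields the clean $O(1)$ bound $\Imu(v_1;y_2\mid v_2)\le \Hen(v_{1,c}+u_3)-\Hen(u_3)+O(1)\le \log(2\sqrt{65})$, which holds for \emph{all} coefficients (so only decodability contributes an exceptional set). Without this modularization your plan is not wrong in principle, but it amounts to re-proving that achievability theorem in the specific lattice/PAM setting, which is a substantial missing piece rather than a finishing detail.
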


In this setting defined in Section~\ref{sec:system},  Theorem~\ref{thm:IChGDoF} reveals that  adding a helper can totally remove the secrecy constraints, in the sense that  the secure sum GDoF of the interference channel with a helper and with secrecy constraints is the same as the sum GDoF  of the interference channel without any helper and without secrecy constraints (a ``W'' curve, see \cite{ETW:08} and Fig.~\ref{fig:IChGDoF}).  The optimal secure sum GDoF is achieved by a cooperative jamming scheme. 
The achievability is described in Section~\ref{sec:CJGau}. The converse proof is provided in Section~\ref{sec:converse}.
Before providing the achievability, we describe the proposed scheme via an example in the following section.

\section{Scheme example  \label{sec:example} }

In this section, we will describe the outline of the proposed scheme via an example. 
We will focus on the specific case with $\alpha = 3/4 $,  for  the two-user \emph{symmetric} Gaussian  interference channel with secrecy constraints and with a helper defined in Section~\ref{sec:system}.
Note that, for  the two-user \emph{symmetric} Gaussian  interference channel without any secrecy constraints, the sum GDoF is $5/4$ when $\alpha = 3/4$ (cf.~\cite{ETW:08}).
If the secrecy constraints are imposed on this interference channel, where the message of each transmitter must be secure from the unintended receiver,   the \emph{secure} sum GDoF is  reduced to $1$ when $\alpha = 3/4$  (cf.~\cite{ChenIC:18}). 
Therefore, there is a penalty in GDoF due to the secrecy constraints.  Interestingly, we will shows that  adding a helper can totally remove the penalty in GDoF due to the secrecy constraints, that is,   by adding a helper the secure sum GDoF is increased to $5/4$, which  is exactly the same as the sum GDoF  of the interference channel \emph{without} secrecy constraints.
For the proposed scheme,  pulse  amplitude modulation (PAM)   and signal alignment will be used in the signal design, and the estimation approaches of noise removal and signal separation will be used in the rate analysis.  The scheme is motivated by the  cooperative jamming  scheme proposed in \cite{ChenIC:18} for a different setting, i.e., a two-user interference channel with confidential messages but without any helper.

\begin{figure}[t!]
\centering
\includegraphics[width=14cm]{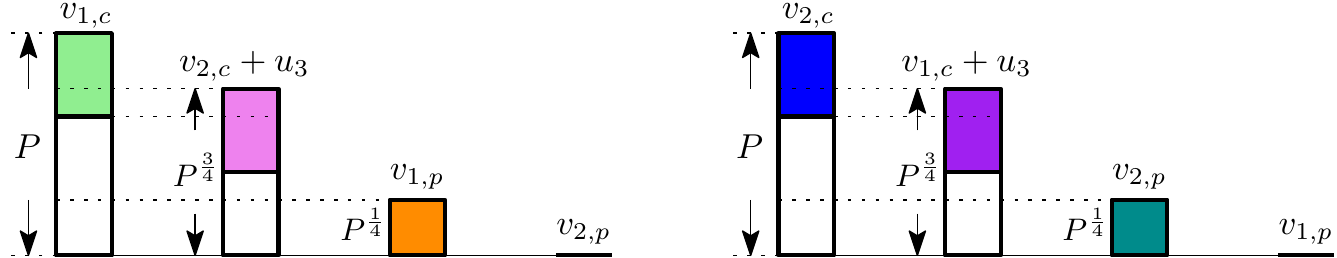}
 \vspace{-.05 in}
 \caption{Rate (GDoF)  and power description at receivers~1 and 2  when $\alpha =  3/4$.}
\label{fig:allocation_c34}
\end{figure}

For this case with $\alpha = 3/4$, the transmitted signals are designed as (removing the time index):
 \begin{align}
   x_1  = &     h_{2 3}  h_{1 2} v_{1,c} +   \sqrt{P^{ -  3/4}}    \cdot  h_{2 3}  h_{1 2} v_{1,p}      \label{eq:xvkkk23100e}  \\
   x_2  = &     h_{1 3}  h_{2 1} v_{2,c} +   \sqrt{P^{ -  3/4}}    \cdot  h_{1 3}  h_{2 1} v_{2,p}   \label{eq:xvkkk23111e}  \\
      x_3 = &     h_{1 2}  h_{21}  u_{3},   \label{eq:xvkkk1341e}  
 \end{align}
 where  $v_{k,c}$ and $v_{k,p}$ are two signals that carry the message of transmitter~$k$, for $k=1,2$, and     $u_{3}$ is the jamming signal from transmitter~3.    
 The random variables $\{v_{1,c}, v_{1,p}, v_{2,c}, v_{2,p}, u_{3}\}$ are \emph{mutually independent}.  Specifically, for $\Omega (\xi,  Q)  \defeq   \{ \xi \cdot a :   \    a \in  \Zc  \cap [-Q,   Q]   \}$  denoting the PAM constellation set,  the above random variables   are  \emph{independently}  and \emph{uniformly}  drawn from their PAM constellation sets, given as 
 \begin{align}
   v_{k,c}      &  \in    \Omega ( \xi =  2\gamma \cdot \frac{ 1}{Q} ,  \quad  Q =  P^{ \frac{  3/8  - \epsilon }{2}} )  \label{eq:constellationGsym1e}   \\ 
   v_{k,p}      &  \in    \Omega ( \xi = \gamma \cdot \frac{ 1}{Q} ,   \    \quad  Q = P^{ \frac{ 1/4  - \epsilon  }{2}} )    \label{eq:constellationGsym2e}    
   \\
      u_{3}      &  \in    \Omega ( \xi =  2\gamma \cdot \frac{ 1}{Q} ,      \quad  Q =  P^{ \frac{ 3/8  - \epsilon  }{2}} )  \label{eq:constellationGsym1ue} 
 \end{align}
 respectively, for $k=1,2$, where $\epsilon>0$ can be made arbitrarily small,   and  $\gamma$ is a finite constant such that  $\gamma  \in \bigl(0, \frac{1}{8\sqrt{2}}\bigr]$. 
 Based on our signal design,   $v_{k,c}$ carries  $ 3/8 $ GDoF and  $v_{k,p}$ carries  $1/4$  GDoF, that is, $ \Hen(v_{k,c}) =  \frac{3/8}{2}\log  P + o(\log P)$ and  $ \Hen(v_{k,p}) =  \frac{1/4}{2}\log  P + o(\log P)$,  when $\epsilon \to 0$,  for $k=1,2$. 
The above signal design satisfies the average power constraints, $\E |x_1|^2 \leq 1$, $\E |x_2|^2 \leq 1$ and $\E |x_3|^2 \leq 1$, which will be proved in the next section in detail. 
 Then, the received signals at the receivers~1 and~2 are given as (without time index)
\begin{align}
y_{1} &=    \sqrt{P} h_{11}    h_{23} h_{12} v_{1,c}  +    \sqrt{P^{ 1/4}} h_{11}    h_{23} h_{12}  v_{1,p}   
\non\\& \quad  +    \sqrt{P^{ 3/4 }} h_{12}h_{21} h_{13}  (   v_{2,c}  +  u_3) +   h_{12}h_{21} h_{13} v_{2,p}   +  z_{1}    \label{eq:yvk1341e}  \\
y_{2} &=       \sqrt{P} h_{22}h_{13}h_{21}  v_{2,c}    +   \sqrt{P^{ 1/4}}h_{22}h_{13}h_{21}  v_{2,p}  
 \non\\ &\quad  +    \sqrt{P^{ 3/4 }}   h_{21}  h_{12} h_{23}(    v_{1,c}  +  u_3)    +   h_{21}  h_{12}  h_{23} v_{1,p}   +  z_{2} .    \label{eq:yvk2341e}   
\end{align}
At receiver~1,  the information signal $v_{2,c}$ of transmitter~2 is \emph{aligned} with the jamming signal $u_3$ of the helper. 
At receiver~2,  the information signal $v_{1,c}$ of transmitter~1 is \emph{aligned} with the jamming signal $u_3$ of the helper. 
As we will see later on, the penalty in GDoF due to the secrecy constraint will be minimized with this signal alignment design. 
Fig.~\ref{fig:allocation_c34} describes the  rate (GDoF) and power of some signals at receivers~1 and 2  when $\alpha = 3/4$.

As we will discuss in detail in the next section,  the  \emph{secure} rate pair 
 \begin{align}
 R_1  &=    \Imu(v_1; y_1) - \Imu(v_1; y_2 | v_2 )    \label{eq:exampleR1e}     \\
R_2  &=    \Imu(v_2; y_2) - \Imu(v_2; y_1 | v_1 )     \label{eq:exampleR2e}     
 \end{align}
is achievable by the proposed scheme with a careful codebook design and message mapping.
Let us begin with the secure rate $R_1$ expressed in \eqref{eq:exampleR1e}. On one hand, we expect the term $ \Imu(v_1; y_1)$ to be sufficiently large; on the other hand, we expect the term $ \Imu(v_1; y_2 | v_2 ) $ to be sufficiently small.
Let $ \hat{v}_{1,c}$ and  $\hat{v}_{1,p}  $ be the  estimates for  $v_{1,c}$ and $v_{1,p}$ respectively from $y_1$,  and let $ \text{Pr} [  \{ v_{1,c} \neq \hat{v}_{1,c} \} \cup  \{ v_{1,p} \neq \hat{v}_{1,p} \}  ] $ denote the corresponding estimation error probability.
Then the term $\Imu(v_1; y_1)$ can be lower bounded by
 \begin{align}
  &\Imu(v_1; y_1)     \non\\
  \geq &  \Imu(v_1; \hat{v}_{1,c}, \hat{v}_{1,p})  \label{eq:rateAWGNIC1202e}     \\
  =  & \Hen(v_1) -   \Hen(v_1  |  \hat{v}_{1,c}, \hat{v}_{1,p})    \non    \\
    \geq &   \Hen(v_1) -       \bigl( 1+    \text{Pr} [  \{ v_{1,c} \neq \hat{v}_{1,c} \} \cup  \{ v_{1,p} \neq \hat{v}_{1,p} \}  ] \cdot \Hen(v_{1}) \bigr)  \label{eq:rateAWGNIC1404e}     \\
         =    &  \bigl( 1 -   \text{Pr} [  \{ v_{1,c} \neq \hat{v}_{1,c} \} \cup  \{ v_{1,p} \neq \hat{v}_{1,p} \}  ] \bigr)   \cdot \Hen(v_{1})  - 1  \label{eq:rateAWGNIC1405e}     
 \end{align}
where \eqref{eq:rateAWGNIC1202e} results from the Markov chain $v_1 \to y_1 \to  \{\hat{v}_{1,c}, \hat{v}_{1,p} \}$;
\eqref{eq:rateAWGNIC1404e} results from Fano's inequality.
Note that, based on our signal design,  $\{ v_{1,p}, v_{1,c}\}$ can be reconstructed from $v_{1}$,  and vice versa, because $v_{1}$ takes the form of $v_{1} =   v_{1,c} +   \sqrt{P^{ - 3/4}}    \cdot  v_{1,p} $  and  the minimum of $v_{1,c}/2$ is bigger than the maximum of $\sqrt{P^{ - 3/4}}    \cdot  v_{1,p}$.  
In this case  $v_{1,c}$ carries  $ 3/8 $ GDoF and  $v_{1,p}$ carries  $1/4$  GDoF, which implies that $v_{1}$ carries $5/8$ GDoF. The term $\Hen(v_{1})$ in \eqref{eq:rateAWGNIC1405e} then becomes  
\begin{align}
\Hen(v_{1}) =  \frac{5/8}{2}\log  P + o(\log P)  .  \label{eq:error82676e}
\end{align}
In Sections~\ref{sec:CJGau}  and \ref{sec:rateerror341} we will prove that  $v_{1,c}$ and $v_{1,p}$ designed in \eqref{eq:constellationGsym1e}-\eqref{eq:constellationGsym1ue} can be estimated from $y_1$ by using the estimation approaches of noise removal and signal separation,  and the corresponding estimation error probability is 
\begin{align}
\text{Pr} [  \{ v_{1,c} \neq \hat{v}_{1,c} \} \cup  \{ v_{1,p} \neq \hat{v}_{1,p} \}  ]  \to 0         \quad \text {as}\quad  P\to \infty    \label{eq:error9266e}
\end{align}
 for \emph{almost} all the channel coefficients  $\{h_{k\ell}\} \in (1, 2]^{2\times 3}$.  
The proof outline of  \eqref{eq:error9266e} will be provided later on in this section. 
By incorporating \eqref{eq:error82676e} and \eqref{eq:error9266e} into \eqref{eq:rateAWGNIC1405e}, it gives 
 \begin{align}
  \Imu(v_1; y_1)   
    \geq    \frac{5/8}{2}\log  P + o(\log P)   \label{eq:rateAWGNIC28667e}     
 \end{align}
for almost all the channel coefficients  $\{h_{k\ell}\} \in (1, 2]^{2\times 3}$.

On the other hand,   $\Imu(v_1; y_2 | v_2 )$ can be bounded  as
   \begin{align}
  \Imu(v_1; y_2 | v_2 )     \leq  & \log (2\sqrt{65})  .  \label{eq:rateAWGNIC9562341e}  
 \end{align}
 The detailed proof is provided in the next section. 
 This term $\Imu(v_1; y_2 | v_2 )$ can be considered as a penalty term in the secure rate $R_1$.  The result in \eqref{eq:rateAWGNIC9562341e} reveals that this penalty is sufficiently small, based on our  careful signal design. In the proposed scheme,  the jamming signal of  the helper  is aligned at a specific direction and power level with the information signals of the transmitters,  which minimizes  the  rate penalty  due to the secrecy constraints. 
Finally,  with the results in \eqref{eq:rateAWGNIC28667e}   and \eqref{eq:rateAWGNIC9562341e}, the secure rate $R_1$ is lower bounded by 
\begin{align}
R_1   =    \Imu(v_1; y_1)\! -\! \Imu(v_1; y_2 | v_2 )   \!\geq\!    \frac{5/8}{2}\log  P \!+\! o(\log P)   \non 
\end{align}
and similarly, the secure rate $R_2$ is lower bounded by  $R_2 \geq     \frac{5/8}{2}\log  P + o(\log P)$, for  almost all the channel coefficients  $\{h_{k\ell}\} \in (1, 2]^{2\times 3}$.
Then the secure GDoF pair $(d_1 = 5/8,  d_2 =  5/8 )$ can be achieved by the proposed cooperative jamming scheme for  almost all the channel coefficients  in  this case with  $ \alpha =3/4$.

In the following we  provide the proof outline of  \eqref{eq:error9266e}.   
After some manipulations,   $y_1$ in \eqref{eq:yvk1341e}  can be rewritten as 
\begin{align}
y_{1}  & = P^{{ \frac{ 3/8 + \epsilon }{2}}} \cdot  2\gamma \cdot  \underbrace{(\sqrt{P^{ 1/4}}  g_0 q_0 + g_1 q_1 )}_{\defeq x_{s}}  + \tilde{z}_{1}  \non
\end{align}
where $x_{s} \defeq \sqrt{P^{ 1/4 }}  g_0 q_0 + g_1 q_1 $,   $\tilde{z}_{1}   \defeq     \sqrt{P^{1/4}} h_{11}    h_{23} h_{12}  v_{1,p}  +    h_{12}h_{21} h_{13} v_{2,p}   +  z_{1}$, 
 $ g_0\defeq  h_{11}    h_{23} h_{12}$,   $g_1\defeq   h_{12}h_{21} h_{13} $
   $q_0  \defeq  \frac{Q_{\max}}{2\gamma} \cdot   v_{1,c} $,   $q_1  \defeq  \frac{Q_{\max}}{2\gamma} \cdot   (   v_{2,c}  +  u_{3})$,  and $ Q_{\max} \defeq P^{ \frac{ 3/8 - \epsilon }{2}} $.
  One important step is to estimate $x_{s}$ from $y_{1}$  by treating other signals as noise. This step is called as noise removal.  After correctly decoding  $x_{s} $, we can recover $q_0$  and $q_1$ from $x_{s} $  because $g_0$ and $g_1$ are rationally independent. This step is called as signal separation (cf.~\cite{MGMK:14}).
As we will show in Section~\ref{sec:rateerror341}, the average power of the virtual noise $\tilde{z}_{1} $ is bounded by 
\[ \E |\tilde{z}_{1}|^2 \leq  \kappa P^{1/4} \]
  for a finite constant $\kappa$. 
  To estimate $x_{s}$ from $y_{1}$,  we  show in Section~\ref{sec:rateerror341} that  the minimum distance of $x_{s}$, denoted by $d_{\min}$,  is  bounded by
 \begin{align}
d_{\min}    \geq   \delta P^{- 1/16}    \label{eq:distancegeqe}
 \end{align}
for all  the channel  coefficients $\{h_{k\ell}\} \in (1, 2]^{2\times 3} \setminus \Ho$, where $\Ho \subseteq (1,2]^{2\times 3}$ is an outage set and  the Lebesgue measure of this outage set, denoted by $\mathcal{L}(\Ho)$,  satisfies  
 \begin{align}
\mathcal{L}(\Ho) \leq 1792 \delta   \cdot     P^{ - \frac{ \epsilon  }{2}}   \label{eq:lebesque11e}
 \end{align}
 for some constants $\delta \in (0, 1]$ and  $\epsilon >0$.
 The proofs of  \eqref{eq:distancegeqe} and \eqref{eq:lebesque11e} build on the conclusion of Lemma~\ref{lm:NMb2} (see below).
 The results of \eqref{eq:distancegeqe} and \eqref{eq:lebesque11e} reveal  that,  the minimum distance of $x_{s}$  is sufficiently large
for almost all  the channel  coefficients in the regime of large $P$, that is $\mathcal{L}(\Ho)  \to 0$ as $P\to \infty$.
At this point  we can decode the sum $x_{s}  =   \sqrt{P^{ 1/4 }}  g_0 q_0 + g_1 q_1 $ 
from $y_1$ by treating other signals as noise (noise removal), with vanishing error probability. 
After that we can recover $q_0  =  \frac{Q_{\max}}{2\gamma} \cdot   v_{1,c} $, as well as $v_{1,c} $,  from $x_{s} $  because $g_0$ and $g_1$ are rationally independent (rational independence).  By removing the decoded $x_{s} $ from  $y_1$, $v_{1,p} $ can be estimated with vanishing error probability, resulting in  
 \begin{align}
 \text{Pr} [  \{ v_{1,c} \neq \hat{v}_{1,c} \} \cup  \{ v_{1,p} \neq \hat{v}_{1,p} \}  ]  \to 0         \quad \text {as}\quad  P\to \infty   \non
 \end{align}
 for   almost all  the channel  coefficients in the regime of large $P$. The lemma used in the proofs of   \eqref{eq:distancegeqe} and \eqref{eq:lebesque11e} is given as follows.

\begin{lemma}   \label{lm:NMb2}
Let $\beta \in (0,1]$,  $\tau  \in \Zc^+$ and $ \tau >1$, and $A_0, A_1, Q_0, Q_1  \in \Zc^+$.  Define the event
\begin{align}
  B(q_0, q_1)  \defeq \{ (g_0, g_1)  \in (1, \tau]^2 :  |  A_0 g_0 q_0 + A_1 g_1 q_1  | < \beta \}       \label{eq:lemmabound0098}  
\end{align}
and set 
\begin{align}
B  \defeq   \bigcup_{\substack{ q_0, q_1 \in \Zc:  \\  (q_0, q_1) \neq  0,  \\  |q_k| \leq Q_k  \ \forall k }}  B(q_0, q_1) .       \label{eq:lemmabound125}
\end{align}
Then the Lebesgue measure of $B $, denoted by $\Lc (B)$,  is bounded by
\begin{align*}
\Lc (B )  \leq   8 (\tau -1) \beta \min\{ \frac{ Q_1 Q_0}{A_1},  \frac{Q_0 Q_1}{A_0},   \frac{Q_0 \tau}{A_1},   \frac{Q_1 \tau}{A_0}\}.
\end{align*}
\end{lemma}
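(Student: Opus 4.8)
The plan is to estimate $\Lc(B)$ by the union bound $\Lc(B)\le\sum\Lc(B(q_0,q_1))$ over the admissible pairs, and then to bound that sum in four different ways, one for each term inside the minimum. A preliminary reduction removes the pairs with a vanishing coordinate: if $q_1=0$ and $q_0\ne0$, then for every $(g_0,g_1)\in(1,\tau]^2$ we have $|A_0q_0g_0|=A_0|q_0|g_0>A_0\ge1\ge\beta$, so $B(q_0,0)=\emptyset$; symmetrically $B(0,q_1)=\emptyset$, and $(0,0)$ is excluded. Hence $B=\bigcup_{1\le|q_0|\le Q_0,\,1\le|q_1|\le Q_1}B(q_0,q_1)$, a union of $4Q_0Q_1$ sets (two signs each for $q_0$ and $q_1$).

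\textbf{Per-pair estimate and the first two terms.} By Fubini, fixing $g_0$ the admissible $g_1$ form (the intersection with $(1,\tau]$ of) an interval of length $2\beta/(A_1|q_1|)$, so $\Lc\big(B(q_0,q_1)\big)\le(\tau-1)\cdot 2\beta/(A_1|q_1|)$, and symmetrically $\le(\tau-1)\cdot 2\beta/(A_0|q_0|)$. For the terms $8\beta(\tau-1)Q_0Q_1/A_1$ and $8\beta(\tau-1)Q_0Q_1/A_0$ I would then just use $|q_1|\ge1$ (resp. $|q_0|\ge1$) and add up the $4Q_0Q_1$ per-pair bounds.

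\textbf{The remaining two terms.} The substance is in, say, the term $8\beta(\tau-1)Q_0\tau/A_1$ (the fourth follows by swapping the two indices). The key point is that for a \emph{fixed} $q_0$ only about $\tau$ values of $q_1$ can make $B(q_0,q_1)$ nonempty: if $|A_0q_0g_0+A_1q_1g_1|<\beta$ for some $g_0,g_1\in(1,\tau]$, then comparing absolute values and using $1<g_i\le\tau$ forces $A_1|q_1|\in\big(\tfrac{A_0|q_0|-\beta}{\tau},\,A_0|q_0|\tau+\beta\big)$, and moreover $q_1$ must have the sign opposite to $q_0$ (so summing later over $q_0\in\{\pm1,\dots,\pm Q_0\}$ introduces no double counting). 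Consequently, for fixed $q_0$,
\[
\sum_{q_1:\,B(q_0,q_1)\ne\emptyset}\Lc\big(B(q_0,q_1)\big)
\ \le\ \frac{2\beta(\tau-1)}{A_1}\sum_{n}\frac1n
\ \le\ \frac{2\beta(\tau-1)}{A_1}\cdot 2\tau
\ =\ \frac{4\beta\tau(\tau-1)}{A_1},
\]
where $n$ runs over the integers in the above window; the harmonic sum is controlled because the endpoints of the window have ratio at most $3\tau^2$ (or the window sits inside $[1,3\tau^2]$ when its lower endpoint drops below $1$), so $\sum_n 1/n\le 1+\ln(3\tau^2)\le 2\tau$ for $\tau\ge2$. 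Summing over the $2Q_0$ choices of $q_0$ gives $\Lc(B)\le 8\beta\tau(\tau-1)Q_0/A_1$, and by symmetry $\Lc(B)\le 8\beta\tau(\tau-1)Q_1/A_0$. Taking the minimum of the four bounds yields the claim.

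\textbf{Main obstacle.} The delicate step is the harmonic-sum estimate just above: a naive union bound over all admissible $q_1$ carries a spurious $\ln Q_1$ factor, so one genuinely must exploit that, for fixed $q_0$, the admissible $|q_1|$ all lie in a multiplicatively bounded window around $A_0|q_0|/A_1$, and then check with the honest constants (using $\beta\le1$, $A_0,A_1\ge1$, $\tau\ge2$, and separating the case where the lower endpoint of the window falls below $1$) that the truncated harmonic sum is at most $2\tau$ — which is exactly what produces the constant $8$ in the statement. The magnitude comparisons, the Fubini slicing, and the counting bookkeeping elsewhere are routine.
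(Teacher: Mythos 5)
Your proof is correct, and it follows a genuinely different route from the paper. Both start from the union bound and the strip estimate $\Lc(B(q_0,q_1))\le 2(\tau-1)\beta/(A_1|q_1|)$, and both dispose of the axis cases. The difference is in how the sum is organized. The paper fixes $q_1$ and observes that the number of admissible $q_0$ is bounded by $2\min\{Q_0,\ \tau A_1|q_1|/A_0\}$; multiplying by the $q_0$-independent strip width $2(\tau-1)\beta/(A_1|q_1|)$, the factor $|q_1|$ cancels, and $\sum_{q_0}\Lc(B(q_0,q_1))\le 4(\tau-1)\beta\min\{Q_0/A_1,\ \tau/A_0\}$ uniformly in $q_1$; summing over the $2Q_1$ values of $q_1$ and then symmetrizing in $(0,1)$ gives all four terms at once with no transcendental estimate. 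You instead fix $q_0$, note that the admissible $|q_1|$ are confined to a multiplicatively bounded window, and control $\sum 1/|q_1|$ by a truncated harmonic sum, which you then need to verify is at most $2\tau$ to hit the constant $8$. Your window bounds $A_1|q_1|\in\bigl((A_0|q_0|-\beta)/\tau,\ A_0|q_0|\tau+\beta\bigr)$, the opposite-sign observation, and the two subcases (lower endpoint below or above $1$) are all correct, and the check $1+\ln(3\tau^2)\le 2\tau$ for integer $\tau\ge 2$ holds. So the statement is recovered, but the paper's counting cancellation is the slicker of the two: it avoids the logarithm, the case split, and the constant-chasing entirely, and in fact gives a bound that does not require $\tau$ to be an integer. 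Your route is a sound alternative and is instructive in that it isolates exactly where a naive union over $q_1$ would lose a $\log Q_1$ factor and why the window saves it.
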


\begin{proof}
See Section~\ref{sec:lm:NMb2}.
\end{proof}

\section{Achievability \label{sec:CJGau} }

In this  section, for the two-user \emph{symmetric} Gaussian  interference channel defined in Section~\ref{sec:system},   we provide  a cooperative jamming scheme,  focusing on the regime of  $\alpha > 1/2$.  
Note that when $0 \leq  \alpha \leq 1/2$, the optimal secure sum GDoF $d_{\text{sum}} = 2(1- \alpha)$ is achievable by a scheme without any helper and without cooperative jamming (cf.~\cite{ChenIC:18}). 
In the proposed scheme,   pulse  amplitude modulation and signal alignment  will be used.  
In what follows, we describe the details of codebook generation and  signal mapping, PAM constellation and  signal alignment, and  secure rate analysis.

\subsubsection{Codebook generation and signal mapping}
At transmitter~$k$, $k=1,2$, it generates a codebook given as
  \begin{align}
     \mathcal{B}_{k} \defeq \Bigl\{  v^{\bln}_k (w_k,  w_k'):  \  w_k \in \{1,2,\cdots, 2^{\bln R_k}\},   
      w_k' \in \{1,2,\cdots, 2^{\bln R_k'}\}   \Bigr\}     \label{eq:code2341J}
     \end{align}
where $v^{\bln}_k $ are the codewords. All the elements of the codewords are independent and identically generated according to a distribution that will be defined later on. 
In \eqref{eq:code2341J}, $w_k'$ is the confusion message that is used to guarantee the security of  the confidential message $w_k$.
  $R_k$ and  $R_k'$ are the rates  of  $w_k$  and $w_k'$, respectively, which will be defined later on (see \eqref{eq:Rk623J} and  \eqref{eq:Rk623bJ}).
To transmit the message $w_k$, transmitter~$k$ selects a sub-codebook  $\mathcal{B}_{k}( w_k) $ that is   defined  as 
\[   \mathcal{B}_{k} (w_k)  \defeq \bigl\{ v^{\bln}_k (w_k,  w_k'): \  w_k' \in \{1,2,\cdots, 2^{\bln R_k'}\}   \bigr\},  \quad k=1,2  \]
and  then  \emph{randomly} selects a codeword $v^{\bln}_k$ from $\mathcal{B}_{k}( w_k) $ according to a uniform distribution.
In this scheme, the chosen codeword $v^{\bln}_k$ will be mapped  to the channel input such that 
 \begin{align}
  x_k (t) =   h_{\ell 3}  h_{k \ell}  v_k (t)      \label{eq:xvkkk}      
   \end{align}
  for $ \ell, k =1,2, \ell \neq k$, and  $t=1,2, \cdots, \bln$, where $v_k (t)$ is the $t$th element of the codeword $v^{\bln}_k$.

\subsubsection{PAM constellation and signal alignment}

Specifically,  each element of the codeword at transmitter $k$, $k=1,2$,  is designed to take the following form
 \begin{align}
   v_{k}  =     \sqrt{P^{ - \beta_{v_{k,c}}}}  \cdot  v_{k,c} +   \sqrt{P^{ - \beta_{v_{k,p}}}}    \cdot  v_{k,p}   \label{eq:xvk}  
 \end{align}
 which implies that   the  channel input in~\eqref{eq:xvkkk} can be rewritten as 
 \begin{align}
  x_k  = &    \sqrt{P^{ - \beta_{v_{k,c}}}}  \cdot h_{\ell 3}  h_{k \ell} v_{k,c} +   \sqrt{P^{ - \beta_{v_{k,p}}}}    \cdot  h_{\ell 3}  h_{k \ell} v_{k,p} \label{eq:xvkkk1}  
 \end{align}
(removing the time index for simplicity) for  $ \ell, k =1,2, \ell \neq k$.
At the helper (transmitter~3), it sends a cooperative jamming signal  designed as  
    \begin{align}
  x_3  = &    \sqrt{P^{ - \beta_{u_{3}}}}  \cdot h_{1 2}  h_{21}  u_{3}.      \label{eq:u3def}  
 \end{align}
In \eqref{eq:xvkkk1} and \eqref{eq:u3def}, $v_{k,c}$, $v_{k,p}$ and $u_{3}$ are   \emph{independent} random variables   which  are \emph{uniformly}  drawn from their PAM constellation sets
 \begin{align}
   v_{k,c}      &  \in    \Omega ( \xi =  \gamma_{v_{k,c}} \cdot \frac{ 1}{Q} ,   \   Q =  P^{ \frac{ \lambda_{v_{k,c}} }{2}} )  \label{eq:constellationGsym1}   \\ 
   v_{k,p}      &  \in    \Omega ( \xi =  \gamma_{v_{k,p}} \cdot \frac{ 1}{Q} ,   \   Q = P^{ \frac{  \lambda_{v_{k,p}} }{2}} )    \label{eq:constellationGsym2}    
   \\
      u_{3}      &  \in    \Omega ( \xi =  \gamma_{u_{3}} \cdot \frac{ 1}{Q} ,      \quad  Q =  P^{ \frac{ \lambda_{u_{3}} }{2}} )  \label{eq:constellationGsym1u} 
 \end{align}
 respectively, for $k=1,2$,  
 and  $\gamma_{v_{k,c}}, \gamma_{v_{k,p}}$ and $\gamma_{u_{3}}$ are some finite constants designed as
  \begin{align}
  \gamma_{v_{1,c}} \!=  \gamma_{v_{2,c}} \!=\gamma_{u_3} \!=  2\gamma_{v_{1,p}} \!=  2\gamma_{v_{2,p}} \!= 2\gamma  \in \bigl(0, \frac{1}{4\sqrt{2}}\bigr].  \label{eq:gammadef} 
 \end{align}
  Table~\ref{tab:para} provides the designed parameters $\{\beta_{v_{k,c}}, \beta_{v_{k,p}}, \beta_{u_{3}}, \lambda_{v_{k,c}},  \lambda_{v_{k,p}}, \lambda_{u_{3}}\}_{k=1,2}$  for different cases of $\alpha$ in the proposed scheme.  
Based on our signal design (see \eqref{eq:xvkkk1}-\eqref{eq:gammadef}),  it can be checked that  the power constraint  $\E |x_k|^2 \leq 1$ is satisfied for $k=1,2, 3$.  For example,  since  $v_{1,c}$  and $v_{1,p}$ are uniformly drawn from $ \Omega (\xi =  2\gamma \cdot \frac{ 1}{Q},  Q=P^{ \frac{ \lambda_{v_{1,c}} }{2}})$ and $ \Omega (\xi =  \gamma \cdot \frac{ 1}{Q},  Q=P^{ \frac{ \lambda_{v_{1,p}} }{2}})$ respectively,  we have 
\begin{align}
 \E |v_{1,c}|^2  = & \frac{8  \gamma^2 \cdot \frac{ 1}{Q^2} }{ 2Q +1}  \sum_{i=1}^{Q} i^2  \leq   \frac{  8  \gamma^2 }{3}   \quad \text{for} \quad Q= P^{ \frac{ \lambda_{v_{1,c}} }{2}} \non \\
  \E |v_{1,p}|^2  =  & \frac{2  \gamma^2 \cdot \frac{ 1}{Q^2} }{ 2Q +1}  \sum_{i=1}^{Q} i^2  \leq   \frac{  2  \gamma^2 }{3}   \quad \text{for} \quad Q=P^{ \frac{ \lambda_{v_{1,p}} }{2}}    \non
\end{align} 
which implies that  
\begin{align}
 &\E |x_1|^2    \non\\ 
 =  & P^{ - \beta_{v_{1,c}}}   |h_{2 3}|^2  |h_{1 2}|^2  \E |v_{1,c}|^2   +   P^{ - \beta_{v_{1,p}}}      |h_{2 3}|^2   |h_{1 2}|^2 \E |v_{1,p}|^2         \non\\
  \leq & 16   \times \frac{ 8  \gamma^2 }{3}   + 16   \times \frac{ 2 \gamma^2 }{3} \non\\    =  &  \frac{ 160 \gamma^2 }{3}  \non\\   \leq &1     \non
\end{align} 
where $h_{ \ell k} \in (1, 2], \forall \ell, k$ and $\gamma  \in \bigl(0, \frac{1}{8\sqrt{2}}\bigr]$. Similarly, one can  check that $\E |x_2|^2 \leq 1$ and $\E |x_3|^2 \leq 1$.
Based on the above signal design,  the  received signals at the receivers~1 and~2 take the following forms (without the time index)
\begin{align}
y_{1} &=    \sqrt{P^{ 1 - \beta_{v_{1,c}}}} h_{11}    h_{23} h_{12} v_{1,c}  +    \sqrt{P^{ 1 - \beta_{v_{1,p}}}} h_{11}    h_{23} h_{12}  v_{1,p}   
\non\\& \quad  +     h_{12}h_{21} h_{13}(   \sqrt{P^{ \alpha - \beta_{v_{2,c}}}} v_{2,c}  +   \sqrt{P^{ \alpha - \beta_{u_3}}} u_3)    \non\\
&\quad  +    \sqrt{P^{ \alpha - \beta_{v_{2,p}}}}  h_{12}h_{21} h_{13} v_{2,p}   +  z_{1}    \label{eq:yvk1}  \\
y_{2} &=     \sqrt{P^{ 1 - \beta_{v_{2,c}}}} h_{22}h_{13}h_{21}  v_{2,c}    +    \sqrt{P^{ 1 - \beta_{v_{2,p}}}} h_{22}h_{13}h_{21}  v_{2,p}    \non\\& \quad+     h_{21}  h_{12} h_{23}(   \sqrt{P^{ \alpha - \beta_{v_{1,c}}}} v_{1,c}  +   \sqrt{P^{\alpha - \beta_{u_3}}} u_3)       \non\\
 & +   \sqrt{P^{ \alpha - \beta_{v_{1,p}}}}  h_{21}  h_{12}  h_{23} v_{1,p}   +  z_{2} .   \label{eq:yvk2}  
\end{align}
 Based on our signal design,  at receiver~1 the signal $v_{2,c}$ is aligned with the jamming signal $u_3$, while at receiver~2 the signal $v_{1,c}$ is aligned with the jamming signal $u_3$.

\subsubsection{Secure rate analysis} We define  the rates  $R_k$ and $R_k'$ as 
\begin{align}
R_k &\defeq   \Imu(v_k; y_k) -  \Imu ( v_k; y_{\ell} | v_{\ell} ) - \epsilon   \label{eq:Rk623J} \\  
R_k'  &\defeq  \Imu ( v_k; y_{\ell} | v_{\ell}) - \epsilon  \label{eq:Rk623bJ}  
\end{align}
for some $\epsilon >0$, $ \ell, k =1,2, \ell \neq k$. Given the above  codebook design and signal mapping, the result of \cite[Theorem~2]{XU:15}   (or \cite[Theorem~2]{LMSY:08}) implies that the rate pair $(R_1, R_2)$ defined in  \eqref{eq:Rk623J} and \eqref{eq:Rk623bJ} is achievable  and  messages $w_1$ and $w_2$ are secure, i.e.,   $\Imu(w_1; y_{2}^{\bln})  \leq  \bln \epsilon$ and $\Imu(w_2; y_{1}^{\bln})  \leq  \bln \epsilon$. 

In what follows we will focus on the regime of  $ \alpha > 1/2$ and analyze the secure rate performance of the cooperative jamming scheme. We will consider four cases:   $\frac{1}{2} < \alpha \leq \frac{2}{3}$,  $\frac{2}{3} \leq  \alpha \leq 1$, $1 \leq \alpha \leq 2$   and  $2 \leq \alpha $. 
We will first consider the  cases of $\frac{1}{2} < \alpha \leq \frac{2}{3}$ and $2 \leq \alpha $,  in which a successive decoding method  will be used in the rate analysis. Later on we will consider the rest two cases,  in which the estimation approaches of noise removal and signal separation due to rational independence will be used in the rate analysis.

\begin{table}
\caption{Parameter design for the symmetric  channel, for some $\epsilon >0$.}
\begin{center}
{\renewcommand{\arraystretch}{1.7}
\begin{tabular}{|c|c|c|c|c|}
  \hline
                     &   $\frac{1}{2} < \alpha \leq \frac{2}{3}$  &  $\frac{2}{3} \leq  \alpha \leq 1$  & $1 \leq \alpha \leq 2$   & $2 \leq \alpha $  \\
   \hline
   $\beta_{v_{1,c}}, \ \beta_{v_{2,c}}$    		&   $0$     			&   $ 0$    &    $0 $    &   $0 $  \\
    \hline
   $\beta_{u_3}$     		     		&   $0$      	&    $0$   &   0    &    0  \\
    \hline
   $\beta_{v_{1,p}}, \ \beta_{v_{2,p}}$ 			&   $\alpha$    		&    $\alpha$    &   $\infty$    &    $\infty$  \\
    \hline
   $\lambda_{v_{1,c}}, \ \lambda_{v_{2,c}}$ 		&   $2\alpha -1 - \epsilon$ 	&  $\alpha/2 - \epsilon$     &   $\alpha/2 - \epsilon$     &  $1 - \epsilon$   \\
    \hline
   $\lambda_{u_3}$ 					&    $2\alpha -1 - \epsilon$ &   $\alpha/2 - \epsilon$    &   $\alpha/2 - \epsilon$     &   $1 - \epsilon$  \\
  \hline
   $\lambda_{v_{1,p}}, \ \lambda_{v_{2,p}}$  &   $1 - \alpha - \epsilon$  	&  $1- \alpha - \epsilon$     &     $0 $  &    0  \\
    \hline
    \end{tabular}
}
\end{center}
\label{tab:para}
\end{table}

 \subsection{Rate analysis when $1/2 <  \alpha \leq  2/3$   \label{sec:CJscheme2334}}

 For the case with $1/2 <  \alpha \leq  2/3$, the parameters are designed as 
 \begin{align}
\beta_{v_{1,c}}&= \beta_{v_{2,c}} =0,        \quad    \lambda_{v_{1,c}} = \lambda_{v_{2,c}}= 2\alpha -1 - \epsilon     \label{eq:para111} \\
  \beta_{v_{1,p}}&=\beta_{v_{2,p}}= \alpha,  \quad  \lambda_{v_{1,p}}= \lambda_{v_{2,p}}  = 1 - \alpha - \epsilon  \label{eq:para333924} \\
    \beta_{u_3}& = 0  ,    \quad  \quad\quad  \quad  \quad  \lambda_{u_3}= 2\alpha -1 - \epsilon,   \label{eq:para222} 
 \end{align}
 where  $\epsilon>0$ can be set arbitrarily small.
 In this case, the transmitted signal at transmitter~$k$, $k=1, 2, 3$, is 
 \begin{align}
   x_1  = &     h_{2 3}  h_{1 2} v_{1,c} +   \sqrt{P^{ -  \alpha}}    \cdot  h_{2 3}  h_{1 2} v_{1,p}      \label{eq:xvkkk123400}  \\
   x_2  = &     h_{1 3}  h_{2 1} v_{2,c} +   \sqrt{P^{ -  \alpha}}    \cdot  h_{1 3}  h_{2 1} v_{2,p}   \label{eq:xvkkk123411}  \\
      x_3 = &    h_{1 2}  h_{21}  u_{3}.   \label{eq:xvkkk123433}  
 \end{align}
Then the received signals at the receivers~1 and~2 are given by
\begin{align}
y_{1} &=    \sqrt{P} h_{11}    h_{23} h_{12} v_{1,c}  +    \sqrt{P^{ 1 - \alpha}} h_{11}    h_{23} h_{12}  v_{1,p}   
\non\\& \quad  +    \sqrt{P^{ \alpha }} h_{12}h_{21} h_{13}  (   v_{2,c}  +  u_3) +   h_{12}h_{21} h_{13} v_{2,p}   +  z_{1}    \label{eq:yvk12334}  \\
y_{2} &=       \sqrt{P} h_{22}h_{13}h_{21}  v_{2,c}    +   \sqrt{P^{ 1 - \alpha}}h_{22}h_{13}h_{21}  v_{2,p}  
 \non\\ &\quad  +    \sqrt{P^{ \alpha }}   h_{21}  h_{12} h_{23}(    v_{1,c}  +  u_3)    +   h_{21}  h_{12}  h_{23} v_{1,p}   +  z_{2} .    \label{eq:yvk22334}   
\end{align}
Fig.~\ref{fig:allocation_a23} depicts the rate (GDoF) and power of some signals at receiver~1   when $\alpha = 2/3$.  

\begin{figure}[t!]
\centering
\includegraphics[width=7cm]{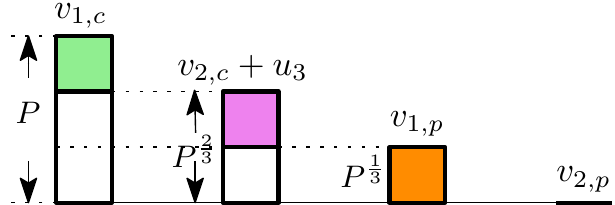}
 \vspace{-.05 in}
\caption{Rate (GDoF) and power description at receiver~1  when $\alpha =  2/3$.}
\label{fig:allocation_a23}
\end{figure}

For the proposed  cooperative jamming  scheme, the secure rate pair $(R_1, R_2)$ defined in \eqref{eq:Rk623J} and \eqref{eq:Rk623bJ} can be achieved. For $\epsilon \to 0$, this secure rate pair can be written as    
\begin{align}
R_1  & =    \Imu(v_1; y_1) - \Imu(v_1; y_2 | v_2 )      \label{eq:lboundit1} \\
R_2  & =    \Imu(v_2; y_2) - \Imu(v_2; y_1 | v_1 ) .    \label{eq:lboundit2}
\end{align}

First, we focus on the lower bound of $R_1$ expressed  in \eqref{eq:lboundit1} and seek to get a tight lower bound on $ \Imu(v_1; y_1)$.
For this case, $v_1$  is designed as \[v_1 = v_{1,c} +   \sqrt{P^{ -\alpha}}   \cdot v_{1,p}\] (see \eqref{eq:xvk}, \eqref{eq:para111} and \eqref{eq:para333924}).   
From $y_1$ expressed in \eqref{eq:yvk12334},  we can estimate $\{v_{1,c}, v_{1,p}\}$  by using a successive decoding method,  given the  design in \eqref{eq:constellationGsym1}-\eqref{eq:constellationGsym1u} and \eqref{eq:para111}-\eqref{eq:xvkkk123433}.  
From the steps in \eqref{eq:rateAWGNIC1202e}-\eqref{eq:rateAWGNIC1405e}, $\Imu(v_1; y_1)$  can be lower bounded by
 \begin{align}
  \Imu(v_1; y_1)  
  \geq  \bigl( 1 -   \text{Pr} [  \{ v_{1,c} \neq \hat{v}_{1,c} \} \cup  \{ v_{1,p} \neq \hat{v}_{1,p} \}  ] \bigr)   \cdot \Hen(v_{1})  - 1 . \label{eq:rateAWGNIC1405}     
 \end{align} 
Given that  $v_{1,c} \in    \Omega (\xi   =  \gamma_{v_{1,c}} \cdot \frac{ 1}{Q},   \   Q =  P^{ \frac{ 2\alpha -1 - \epsilon}{2}} ) $ and $v_{1,p}  \in    \Omega (\xi   =\gamma_{v_{1,p}} \cdot \frac{ 1}{Q},   \   Q = P^{ \frac{ 1 - \alpha - \epsilon}{2}} ) $,  the rates of $v_{1,c}$ and $ v_{1,p}$ are computed as follows:
  \begin{align}
  \Hen(v_{1,c}) &=  \log (2 \cdot P^{ \frac{ 2\alpha -1 - \epsilon}{2}} +1)   \label{eq:rateAWGNIC111}     \\
  \Hen (v_{1,p}) &=  \log (2 \cdot P^{ \frac{ 1 - \alpha - \epsilon}{2}} +1)  \label{eq:rateAWGNIC222}                              
 \end{align}
In this case, $\{ v_{1,p}, v_{1,c}\}$ can be reconstructed from $v_{1}$,  and vice versa. 
This fact, together with  \eqref{eq:rateAWGNIC111} and \eqref{eq:rateAWGNIC222}, gives 
  \begin{align}
 \Hen(v_{1}) 
 &= \Hen(v_{1,c})+  \Hen(v_{1,p})   \non\\
 &=   \frac{ \alpha  - 2\epsilon}{2} \log P + o(\log P) . \label{eq:rateAWGNIC333}                                  
 \end{align}
To further derive the lower bound on $ \Imu(v_1; y_1)$ from \eqref{eq:rateAWGNIC1405}, we provide an upper bound on the error probability $ \text{Pr} [  \{ v_{1,c} \neq \hat{v}_{1,c} \} \cup  \{ v_{1,p} \neq \hat{v}_{1,p} \}  ]$, described in the following lemma. 
 \begin{lemma}  \label{lm:rateerror2334}
When $1/2 < \alpha \leq 2/3$, given the signal design in \eqref{eq:constellationGsym1}-\eqref{eq:constellationGsym1u} and \eqref{eq:para111}-\eqref{eq:xvkkk123433},  the error probability of estimating  $\{v_{k,c}, v_{k,p} \}$ from $y_k$, $k=1,2$,  is
 \begin{align}
 \text{Pr} [  \{ v_{k,c} \neq \hat{v}_{k,c} \} \cup  \{ v_{k,p} \neq \hat{v}_{k,p} \}  ]  \to 0         \quad \text {as}\quad  P\to \infty .   \label{eq:error1c1p}
 \end{align}
 \end{lemma}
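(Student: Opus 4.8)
The plan is to show that both $v_{k,c}$ and $v_{k,p}$ can be recovered from $y_k$ by a two-stage successive decoding procedure, and that at each stage the relevant signal is carried on a PAM constellation whose minimum distance dominates the effective noise, so that the error probability vanishes. By the symmetry of the setting it suffices to treat $k=1$; the case $k=2$ is identical after swapping indices. Recall from \eqref{eq:yvk12334} that
\begin{align}
y_{1} &=    \sqrt{P} h_{11}    h_{23} h_{12} v_{1,c}  +    \sqrt{P^{ 1 - \alpha}} h_{11}    h_{23} h_{12}  v_{1,p}   \non\\
&\quad  +    \sqrt{P^{ \alpha }} h_{12}h_{21} h_{13}  (   v_{2,c}  +  u_3) +   h_{12}h_{21} h_{13} v_{2,p}   +  z_{1}. \non
\end{align}
The key structural observation is that the designed exponents put the signal $v_{1,c}$ at the top power level $\sqrt P$, the aligned interference-plus-jamming pair $(v_{2,c}+u_3)$ at level $\sqrt{P^\alpha}$, the desired private signal $v_{1,p}$ at level $\sqrt{P^{1-\alpha}}$, and the remaining term $v_{2,p}$ at level $O(1)$; since $1/2<\alpha\le 2/3$ we have the strict ordering $1 > \alpha > 1-\alpha > 0$, so these four levels are genuinely separated.

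First I would decode $v_{1,c}$ by treating everything else as noise. The signal $\sqrt P\,h_{11}h_{23}h_{12}v_{1,c}$ lives on a PAM grid with spacing $\sqrt P \cdot 2\gamma/Q$ where $Q=P^{(2\alpha-1-\epsilon)/2}$, i.e. spacing of order $P^{(1-(2\alpha-1)+\epsilon)/2}=P^{(2-2\alpha+\epsilon)/2}$, while the aggregate interference $\sqrt{P^\alpha}h_{12}h_{21}h_{13}(v_{2,c}+u_3)+\cdots+z_1$ has power $O(P^\alpha)$, hence standard deviation $O(P^{\alpha/2})$. Since $\alpha < 2-2\alpha$ exactly when $\alpha<2/3$, the minimum distance exceeds the noise standard deviation by a positive power of $P$, so a nearest-point decoder recovers $v_{1,c}$ with error probability that decays (super-polynomially, via the Gaussian tail bound) to $0$. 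Subtracting the decoded contribution of $v_{1,c}$ from $y_1$ leaves a signal in which $v_{1,p}$ (at level $\sqrt{P^{1-\alpha}}$) is now on top; its PAM spacing is of order $\sqrt{P^{1-\alpha}}\cdot\gamma/Q$ with $Q=P^{(1-\alpha-\epsilon)/2}$, i.e. order $P^{\epsilon/2}$, which again dominates the $O(1)$-power residual $h_{12}h_{21}h_{13}v_{2,p}+z_1$ once $v_{2,c}+u_3$ is also removed — note that after decoding $v_{1,c}$ one must still argue the $\sqrt{P^\alpha}(v_{2,c}+u_3)$ term does not obstruct decoding of $v_{1,p}$. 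Because $v_{2,c}$ and $u_3$ are aligned into a single PAM variable $v_{2,c}+u_3$ at power $O(P^\alpha)$ and its constellation spacing times $\sqrt{P^\alpha}$ is comparable to $v_{1,c}$'s spacing times $\sqrt P$ (both are $\approx P^{(2-2\alpha+\epsilon)/2}\cdot\text{const}$), this middle layer can be decoded and stripped as well, or one invokes the union bound over the joint constellation; then $v_{1,p}$ is decoded against only $O(1)$ noise.

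The main obstacle is the middle step: after peeling $v_{1,c}$, the term $\sqrt{P^\alpha}h_{12}h_{21}h_{13}(v_{2,c}+u_3)$ still dwarfs the $O(1)$ floor, so one cannot immediately read off $v_{1,p}$; one must decode the aligned sum first. This requires checking that the joint constellation of $(v_{1,c},\,v_{2,c}+u_3)$ — or equivalently a scaled integer combination $a_0 g_0 q_0 + a_1 g_1 q_1$ with $g_0=h_{11}h_{23}h_{12}$, $g_1=h_{12}h_{21}h_{13}$ — has minimum distance bounded below by a positive power of $P$, which is precisely where a Khintchine–Groshev / rational-independence argument enters; here, however, because the two levels $\sqrt P$ and $\sqrt{P^\alpha}$ are separated and the constellation of $v_{2,c}+u_3$ is coarser, one can actually avoid the metric-number-theory machinery: the high bits carrying $v_{1,c}$ and the lower block carrying $v_{2,c}+u_3$ occupy disjoint bit ranges up to a gap of size $P^{(1-\alpha-\epsilon)/2}$, so successive decoding from the top down works deterministically for all channel coefficients in $(1,2]$. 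I would therefore carry out three successive slicing steps — decode $v_{1,c}$, then $v_{2,c}+u_3$, then $v_{1,p}$ — applying the Gaussian tail bound at each step and a union bound over the (polynomially many) constellation points to conclude the three error events each have probability $o(1)$, whence \eqref{eq:error1c1p} follows.
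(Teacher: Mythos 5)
Your proposal is correct and follows essentially the same route as the paper: three-stage successive decoding of $v_{1,c}$, then $v_{2,c}+u_3$, then $v_{1,p}$, exploiting the strict power-level ordering $1>\alpha>1-\alpha>0$ and the $\epsilon$ margin built into the PAM cardinalities, and --- as you correctly observe --- requiring no rational-independence / metric-number-theory machinery, since the top-down slicing works deterministically for \emph{all} channel coefficients in $(1,2]^{2\times3}$. The paper packages each slicing stage by invoking a reusable estimation result (Lemma~\ref{lm:AWGNic}, cited from \cite{ChenIC:18}) whose hypothesis is that the residual interference is a \emph{bounded} discrete variable; this is the cleaner framing --- your ``power $O(P^\alpha)$, hence standard deviation $O(P^{\alpha/2})$, Gaussian tail'' phrasing gets the right scaling but the interference here is not Gaussian, and what one actually uses is its deterministic magnitude bound $O(P^{\alpha/2})$ together with the genuine Gaussian tail of $z_1$. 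One small numerical slip worth fixing: after scaling, the received spacing of $v_{1,c}$ is $\propto P^{(2-2\alpha+\epsilon)/2}$ whereas that of $v_{2,c}+u_3$ is $\propto P^{(1-\alpha+\epsilon)/2}$, so they are \emph{not} both $\approx P^{(2-2\alpha+\epsilon)/2}$ as your parenthetical claims; the relevant point is that each dominates its own residual floor ($P^{\alpha/2}$ and $P^{(1-\alpha)/2}$ respectively) by a factor $P^{\epsilon/2}$, which is exactly what makes the chain go through including at the boundary $\alpha=2/3$.
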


\begin{proof}
See Section~\ref{sec:rateerror2334}.
\end{proof}

By combining  \eqref{eq:rateAWGNIC1405},  \eqref{eq:rateAWGNIC333}  and Lemma~\ref{lm:rateerror2334},  $ \Imu(v_1; y_1)$ can be lower bounded as 
  \begin{align}
  \Imu(v_1; y_1)   &\geq    \frac{ \alpha  - 2\epsilon}{2} \log P + o(\log P) .   \label{eq:rateAWGNIC17762}  
 \end{align}

Let us now focus on  the term $\Imu(v_1; y_2 | v_2 )$ in \eqref{eq:lboundit1}, which can be upper bounded by
  \begin{align}
  &\Imu(v_1; y_2 | v_2 )     \non\\
\leq &  \Imu(v_1; y_2,  v_{1,c} + u_3  | v_2 )     \label{eq:rateAWGNIC18374}   \\
=   & \Imu(v_1;    v_{1,c} + u_3  )   +  \Imu(v_1;    h_{21}  h_{12}  h_{23} v_{1,p}  +  z_{2}   | v_2 , v_{1,c} + u_3 )     \label{eq:rateAWGNIC18735}   \\
=   & \Hen( v_{1,c} + u_3  )  - \Hen(u_3  )    \non\\&   +  \hen( h_{21}  h_{12}  h_{23} v_{1,p}  +  z_{2}   |  v_2 , v_{1,c} + u_3)   -  \hen(   z_{2} )    \label{eq:rateAWGNIC77364}    \\
\leq &   \log (4  \cdot P^{ \frac{ 2\alpha -1 - \epsilon}{2}} +1)    -   \log (2  \cdot P^{ \frac{ 2\alpha -1 - \epsilon}{2}} +1)   \non\\& +  \hen( h_{21}  h_{12}  h_{23} v_{1,p}   +  z_{2} )     -  \frac{1}{2}\log (2 \pi e)    \label{eq:rateAWGNIC1955}   \\
\leq &  1   +  \frac{1}{2}\log (2\pi e \times 65 )       -  \frac{1}{2} \log (2\pi e)    \label{eq:rateAWGNIC8356}   \\
= & \log (2\sqrt{65})    \label{eq:rateAWGNIC9562}  
 \end{align}
where  
 \eqref{eq:rateAWGNIC18735} follows from the fact that $v_1, v_2, u_3 $ are mutually independent;
\eqref{eq:rateAWGNIC77364}   stems from that $\{ v_{k,p}, v_{k,c}\}$ can be reconstructed from $v_{k}$ for $k=1,2$;
\eqref{eq:rateAWGNIC1955} holds true because   $\Hen(u_3  ) =   \log (2  \cdot P^{ \frac{ 2\alpha -1 - \epsilon}{2}} +1) $  and $\Hen( v_{1,c} + u_3  ) \leq  \log (4  \cdot P^{ \frac{ 2\alpha -1 - \epsilon}{2}} +1)$;  
\eqref{eq:rateAWGNIC8356} follows from the derivation that  $\hen( h_{21}  h_{12}  h_{23} v_{1,p}   +  z_{2} ) \leq  \frac{1}{2} \log ( 2 \pi e ( | h_{21}|^2 \cdot | h_{12}|^2 \cdot | h_{23}|^2\cdot \E |v_{1,p}|^2     +  \E|  z_{2}|^2 ))  \leq  \frac{1}{2} \log ( 2 \pi e\times 65 )$.

Finally,  by incorporating \eqref{eq:rateAWGNIC17762}  and \eqref{eq:rateAWGNIC9562} into \eqref{eq:lboundit1}, we can bound the secure rate $R_1$ as 
\begin{align}
R_1   =    \Imu(v_1; y_1) - \Imu(v_1; y_2 | v_2 )       \geq   \frac{\alpha  - 2\epsilon}{2} \log P + o(\log P).   \non 
\end{align}
Let $\epsilon \to 0$, then the secure GDoF $d_1 =  \alpha$ is achievable.   
Due to the symmetry,  $d_2 =  \alpha$ is also achievable by the proposed cooperative jamming scheme when  $1/2 <  \alpha \leq  2/3$.

 \subsection{Rate analysis when $\alpha \geq  2$   \label{sec:CJscheme2}}

  For the case with $ \alpha \geq  2$, the parameters  are designed as 
 \begin{align}
\beta_{v_{1,c}}&= \beta_{v_{2,c}} =0,        \quad    \lambda_{v_{1,c}} = \lambda_{v_{2,c}}= 1 - \epsilon     \label{eq:para11122} \\
  \beta_{v_{1,p}}&=\beta_{v_{2,p}}= \infty,  \  \   \lambda_{v_{1,p}}= \lambda_{v_{2,p}}  = 0  \label{eq:para33392422} \\
    \beta_{u_3}& = 0 ,\quad  \quad\quad  \quad  \quad  \lambda_{u_3}= 1 - \epsilon.   \label{eq:para22222} 
 \end{align}
In this case, the transmitted signals are designed as
  \begin{align}
   x_1  =  \!    h_{2 3}  h_{1 2} v_{1,c} , \     
   x_2  =    \!  h_{1 3}  h_{2 1} v_{2,c} , \   
      x_3 =   \!  h_{1 2}  h_{21}  u_{3}.   \label{eq:xvkkk12343311}  
 \end{align}
Then the received signals at the receivers~1 and~2 become
\begin{align}
y_{1} &=    \sqrt{P} h_{11}    h_{23} h_{12} v_{1,c}   +    \sqrt{P^{ \alpha }} h_{12}h_{21} h_{13}  (   v_{2,c}  +  u_3)    +  z_{1}    \label{eq:yvk122}  \\
y_{2} &=       \sqrt{P} h_{22}h_{13}h_{21}  v_{2,c}     +    \sqrt{P^{ \alpha }}   h_{21}  h_{12} h_{23}(    v_{1,c}  +  u_3)     +  z_{2} .    \label{eq:yvk222}   
\end{align}
 Figure~\ref{fig:allocation_b2} depicts the rate and power of some signals at receiver~1 when $\alpha = 2$.

\begin{figure}[t!]
\centering
\includegraphics[width=3.5cm]{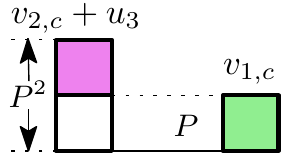}
 \vspace{-.05 in}
 \caption{Rate (GDoF) and power description at receiver~1  when $\alpha =  2$.}
\label{fig:allocation_b2}
\end{figure}

From \eqref{eq:Rk623J} and \eqref{eq:Rk623bJ},  the  secure rates  $R_1 = \Imu(v_1; y_1) - \Imu(v_1; y_2 | v_2 ) $ and  $R_2  =    \Imu(v_2; y_2) - \Imu(v_2; y_1 | v_1 ) $ can be achieved by the proposed scheme.
For this case, $v_k$  is designed as  \[v_k =  v_{k,c} , \quad k=1,2. \] 
From $y_k$ expressed in \eqref{eq:yvk122} and \eqref{eq:yvk222},  we can estimate $v_{k,c}$   by using a successive decoding method, for $k=1,2$. 
Lemma~\ref{lm:rateerror322}  provides a result on the error probability for this estimation.

 \begin{lemma}  \label{lm:rateerror322}
When $ \alpha \geq 2$, given the signal design in \eqref{eq:constellationGsym1}-\eqref{eq:constellationGsym1u} and \eqref{eq:para11122}-\eqref{eq:xvkkk12343311},  the error probability of  estimating  $v_{k,c}$ from $y_k$, $k=1,2$,  is
 \begin{align}
 \text{Pr} [ v_{k,c} \neq \hat{v}_{k,c}  ]  \to 0         \quad \text {as}\quad  P\to \infty  .  \label{eq:error1c1p322}
 \end{align}
 \end{lemma}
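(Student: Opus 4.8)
The plan is to analyze the two-stage successive decoder at receiver~1; receiver~2 is symmetric. Since $\alpha \ge 2$, the aggregate interference-plus-jamming term $\sqrt{P^{\alpha}}h_{12}h_{21}h_{13}(v_{2,c}+u_3)$ in \eqref{eq:yvk122} sits far above the desired term $\sqrt{P}h_{11}h_{23}h_{12}v_{1,c}$, so the receiver first decodes the \emph{sum} $w\defeq v_{2,c}+u_3$, subtracts its contribution, and then decodes $v_{1,c}$ from the residual. A key structural point is that $v_{2,c}$ and $u_3$ never have to be resolved individually: only their sum must be removed, and this is precisely what makes $\alpha\ge 2$ sufficient without invoking rational independence or the minimum-distance-of-a-sum estimate of Lemma~\ref{lm:NMb2}.

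Step~1 (decode $w$). Because $v_{2,c}$ and $u_3$ are both uniform on $\Omega(2\gamma/Q,Q)$ with $Q=P^{(1-\epsilon)/2}$, their sum $w$ takes values on the lattice $\frac{2\gamma}{Q}\ints$ restricted to $[-4\gamma,4\gamma]$, so distinct values of $\sqrt{P^{\alpha}}h_{12}h_{21}h_{13}w$ are separated by at least $2\gamma\,h_{12}h_{21}h_{13}\,P^{(\alpha-1+\epsilon)/2}\ge 2\gamma P^{(1+\epsilon)/2}$. Treating $\sqrt{P}h_{11}h_{23}h_{12}v_{1,c}+z_1$ as noise, this ``noise'' has magnitude at most $16\gamma\sqrt{P}+|z_1|$ since $|v_{1,c}|\le 2\gamma$ and $h_{k\ell}\le 2$. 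A minimum-distance decoder for $\sqrt{P^{\alpha}}h_{12}h_{21}h_{13}w$ therefore errs only when $16\gamma\sqrt{P}+|z_1|$ exceeds half the minimum separation, i.e.\ only when $|z_1|\ge \gamma P^{(1+\epsilon)/2}-16\gamma\sqrt{P}$, which for large $P$ forces $|z_1|\ge \frac{\gamma}{2}P^{(1+\epsilon)/2}$; the Gaussian tail bound makes this probability vanish as $P\to\infty$.

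Step~2 (decode $v_{1,c}$). Conditioned on $\hat{w}=w$, subtracting the decoded term from $y_1$ leaves $y_1'=\sqrt{P}h_{11}h_{23}h_{12}v_{1,c}+z_1$. Distinct values of $\sqrt{P}h_{11}h_{23}h_{12}v_{1,c}$ are separated by $2\gamma h_{11}h_{23}h_{12}P^{\epsilon/2}\ge 2\gamma P^{\epsilon/2}\to\infty$, while the residual noise is just the unit-variance Gaussian $z_1$, so a nearest-point decoder recovers $v_{1,c}$ unless $|z_1|\ge\gamma P^{\epsilon/2}$, again a vanishing probability. A union bound over the two steps gives $\text{Pr}[v_{1,c}\neq\hat{v}_{1,c}]\le\text{Pr}[\hat{w}\neq w]+\text{Pr}[\hat{v}_{1,c}\neq v_{1,c}\mid\hat{w}=w]\to 0$, and the identical argument applied to \eqref{eq:yvk222} at receiver~2 yields \eqref{eq:error1c1p322}.

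The remaining work is essentially bookkeeping: the only substantive point is the exponent comparison $(\alpha-1+\epsilon)/2>1/2$, which requires $\alpha\ge 2$ (the endpoint $\alpha=2$ being rescued by $\epsilon>0$) and guarantees that the combined interference lattice is coarse enough to be decoded through the $O(\sqrt{P})$-bounded desired-signal term, after which the desired signal is decoded against thermal noise alone. I do not expect a genuine obstacle here; the care needed is in stating the minimum-distance decoder precisely and in confirming that decoding the sum $v_{2,c}+u_3$ (rather than its two summands) is all the rate expression \eqref{eq:Rk623J} asks for in this regime.
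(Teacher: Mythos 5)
Your proposal is correct and follows essentially the same route as the paper: a two-stage successive decoder that first decodes the aligned sum $v_{2,c}+u_3$ treating the desired $O(\sqrt{P})$-scaled term as bounded noise, then decodes $v_{1,c}$ against thermal noise alone, with a union bound. The only difference is cosmetic: the paper delegates both per-stage error bounds to Lemma~\ref{lm:AWGNic} (cited from \cite{ChenIC:18}), whereas you unfold the same minimum-distance and Gaussian-tail calculation explicitly, reaching the identical exponent condition $\alpha-1>1-\epsilon$.
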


\begin{proof}
See Appendix~\ref{sec:rateerror322}.
\end{proof}

Considering the lower bound on the secure rate $R_1$, 
$\Imu(v_1; y_1)$ can be lower bounded by
  \begin{align}
  \Imu(v_1; y_1)   \geq      \bigl( 1 -   \text{Pr} [   v_{1,c} \neq \hat{v}_{1,c}  ] \bigr)   \cdot \Hen(v_{1})  - 1  \label{eq:rateAWGNIC1405322}     
      \end{align}
 by following the steps in \eqref{eq:rateAWGNIC1202e}-\eqref{eq:rateAWGNIC1405e}. 
In this case with   $v_1 =  v_{1,c}$ and  $v_{1,c} \in \Omega ( \xi =  \gamma_{v_{1,c}} \cdot \frac{ 1}{Q} ,   \   Q =  P^{ \frac{1 - \epsilon }{2}} )$,  the rate of $v_{1} $ is given by  
  \begin{align}
 \Hen(v_{1}) =  \Hen(v_{1,c}) &=  \log (2 \cdot P^{ \frac{1  - \epsilon }{2}} +1) .  \label{eq:rateAWGNIC1c322}     
 \end{align}

By combining  \eqref{eq:rateAWGNIC1c322} and Lemma~\ref{lm:rateerror322}, the lower bound of $\Imu(v_1; y_1)$ is given as
\begin{align}
 \Imu(v_1; y_1)  \geq   \frac{ 1  - \epsilon}{2} \log P + o(\log P) .   \label{eq:rateAWGNIC17762322}  
\end{align}

By following the steps \eqref{eq:rateAWGNIC18374}-\eqref{eq:rateAWGNIC9562},  $\Imu(v_1; y_2 | v_2 )$ can be bounded by 
 \begin{align}
  &\Imu(v_1; y_2 | v_2 )     \non\\
\leq &  \Imu(v_1; y_2,  v_{1,c} + u_3  | v_2 )    \non \\
=  &    \Imu(v_1;    v_{1,c} + u_3  |  v_2 )   + \Imu(v_1;    y_2  |  v_2, v_{1,c} + u_3 )      \non  \\
=   & \Imu(v_1;    v_{1,c} + u_3  )   + \underbrace{  \Imu(v_1;     z_{2}   | v_2 , v_{1,c} + u_3 ) }_{=0}   \non\\
=   & \Hen( v_{1,c} + u_3  )  - \Hen(u_3  )        \non\\
\leq &   \log (4  \cdot P^{ \frac{ 1 - \epsilon}{2}} +1)    -   \log (2  \cdot P^{ \frac{ 1 - \epsilon}{2}} +1)     \non\\ 
\leq &  1   .    \label{eq:rateAWGNIC9562322}  
 \end{align}

Finally,  with the results in \eqref{eq:rateAWGNIC17762322}  and \eqref{eq:rateAWGNIC9562322}, we can bound the secure rate $R_1$ as 
\begin{align}
R_1   =    \Imu(v_1; y_1) - \Imu(v_1; y_2 | v_2 )    \geq  \frac{ 1  - \epsilon}{2} \log P + o(\log P).  \non 
\end{align}
Let $\epsilon \to 0$, then the  secure GDoF $d_1 =  1$ is achievable.   
Due to the symmetry,  $d_2 =  1$ is also achievable by the proposed cooperative jamming scheme when  $ \alpha \geq 2$.

\subsection{Rate analysis when $2/3 \leq  \alpha \leq  1$   \label{sec:CJscheme231}}

The rate analysis for this case  is different from that for the previous two cases.  In the previous two cases, a successive decoding method  is used in the rate analysis. 
In this case, we will use the estimation approaches of noise removal and signal separation due to rational independence that will be discussed later on.

 For this case with $2/3 \leq  \alpha \leq  1$,  the parameters are designed as 
  \begin{align}
\beta_{v_{1,c}}&= \beta_{v_{2,c}} =0,        \quad    \lambda_{v_{1,c}} = \lambda_{v_{2,c}}= \alpha/2 - \epsilon     \label{eq:para34111} \\
  \beta_{v_{1,p}}&=\beta_{v_{2,p}}= \alpha,  \   \  \lambda_{v_{1,p}}= \lambda_{v_{2,p}}  = 1 - \alpha - \epsilon  \label{eq:para854673} \\
    \beta_{u_3}& = 0 ,\quad  \quad\quad \quad  \   \  \lambda_{u_3}= \alpha/2 - \epsilon,   \label{eq:para746376} 
 \end{align}
 where  $\epsilon>0$ can be set arbitrarily small.
In this case, the transmitted signals are designed as
 \begin{align}
   x_1  = &     h_{2 3}  h_{1 2} v_{1,c} +   \sqrt{P^{ -  \alpha}}    \cdot  h_{2 3}  h_{1 2} v_{1,p}      \label{eq:xvkkk23100}  \\
   x_2  = &     h_{1 3}  h_{2 1} v_{2,c} +   \sqrt{P^{ -  \alpha}}    \cdot  h_{1 3}  h_{2 1} v_{2,p}   \label{eq:xvkkk23111}  \\
      x_3 = &     h_{1 2}  h_{21}  u_{3}.   \label{eq:xvkkk1341}  
 \end{align}
Then the received signals at the receivers~1 and~2  take the same forms as in   \eqref{eq:yvk12334} and \eqref{eq:yvk22334}.
Fig.~\ref{fig:allocation_c34} describes the rate and power of some signals at receivers~1  and~2  when $\alpha = 3/4$.

In this proposed scheme,     the  secure rates $R_1  =    \Imu(v_1; y_1) - \Imu(v_1; y_2 | v_2 ) $ and $R_2  =    \Imu(v_2; y_2) - \Imu(v_2; y_1 | v_1 ) $ are achievable  (see  \eqref{eq:Rk623J} and \eqref{eq:Rk623bJ}).
Let us bound the secure rate $R_1$ first. 
By following the steps in \eqref{eq:rateAWGNIC1202e}-\eqref{eq:rateAWGNIC1405e},  $\Imu(v_1; y_1)$ can be lower bounded by
  \begin{align}
  \Imu(v_1; y_1)    \geq \bigl( 1 -   \text{Pr} [  \{ v_{1,c} \neq \hat{v}_{1,c} \} \cup  \{ v_{1,p} \neq \hat{v}_{1,p} \}  ] \bigr)   \cdot \Hen(v_{1})  - 1 . \label{eq:rateAWGNIC1405341}    
 \end{align}

For this case, the rates of $v_{1,c}$, $v_{1,p}$ and $v_{1} =  v_{1,c} +   \sqrt{P^{ - \alpha}}    \cdot  v_{1,p} $ are  computed as
  \begin{align}
  \Hen(v_{1,c}) &=  \log (2 \cdot P^{ \frac{ \alpha/2  - \epsilon}{2}} +1)   \label{eq:rateAWGNIC111341}     \\
  \Hen (v_{1,p}) &=  \log (2 \cdot P^{ \frac{ 1 - \alpha - \epsilon}{2}} +1)   \label{eq:rateAWGNIC222341}   \\
  \Hen(v_{1})
& =    \frac{1- \alpha/2  - 2\epsilon}{2} \log P + o(\log P) . \label{eq:rateAWGNIC333341}                                 
 \end{align}
 To further derive the lower bound on $ \Imu(v_1; y_1)$ from \eqref{eq:rateAWGNIC1405341}, we provide an upper bound on the error probability $ \text{Pr} [  \{ v_{1,c} \neq \hat{v}_{1,c} \} \cup  \{ v_{1,p} \neq \hat{v}_{1,p} \}  ]$, described in the following lemma. 
\begin{lemma}  \label{lm:rateerror341}
 When $2/3 \leq \alpha \leq 1$, given the signal design in \eqref{eq:constellationGsym1}-\eqref{eq:constellationGsym1u} and \eqref{eq:para34111}-\eqref{eq:xvkkk1341}, then for almost all the channel coefficients  $\{h_{k\ell}\} \in (1, 2]^{2\times 3}$,  the error probability of estimating  $\{v_{k,c}, v_{k,p} \}$ from $y_k$, $k=1,2$,  is
 \begin{align}
 \text{Pr} [  \{ v_{k,c} \neq \hat{v}_{k,c} \} \cup  \{ v_{k,p} \neq \hat{v}_{k,p} \}  ]  \to 0         \quad \text {as}\quad  P\to \infty.    \label{eq:error1c1p341}
 \end{align}
 \end{lemma}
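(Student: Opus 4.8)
By the symmetry of the channel it is enough to bound the probability of mis-estimating $\{v_{1,c},v_{1,p}\}$ from $y_1$ in \eqref{eq:yvk12334}; receiver~2 is handled identically. The plan is to run, for the whole interval $\alpha\in[2/3,1]$, the same ``noise removal then signal separation'' strategy previewed in Section~\ref{sec:example} for $\alpha=3/4$. Since $1\ge\alpha\ge 1-\alpha\ge 0$ in this range, the four useful streams in $y_1$ occupy strictly ordered power levels: $v_{1,c}$ at $\sqrt P$, $v_{2,c}+u_3$ at $\sqrt{P^\alpha}$, $v_{1,p}$ at $\sqrt{P^{1-\alpha}}$, and $v_{2,p}$ at $1$. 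Absorb the two lowest streams together with the Gaussian noise into an effective noise $\tilde z_1\defeq\sqrt{P^{1-\alpha}}\,h_{11}h_{23}h_{12}\,v_{1,p}+h_{12}h_{21}h_{13}\,v_{2,p}+z_1$; since $|v_{1,p}|,|v_{2,p}|\le\gamma$ and $h_{k\ell}\in(1,2]$, we get $\E|\tilde z_1|^2\le\kappa P^{1-\alpha}$ for a finite constant $\kappa$. Writing $v_{1,c}=(2\gamma/Q_c)q_0$ and $v_{2,c}+u_3=(2\gamma/Q_c)q_1$ with $Q_c\defeq P^{(\alpha/2-\epsilon)/2}$, $q_0\in\Zc\cap[-Q_c,Q_c]$, $q_1\in\Zc\cap[-2Q_c,2Q_c]$, puts the output in the form $y_1=2\gamma P^{\alpha/4+\epsilon/2}\,x_s+\tilde z_1$ with $x_s\defeq\sqrt{P^{1-\alpha}}\,g_0q_0+g_1q_1$, $g_0\defeq h_{11}h_{23}h_{12}$, $g_1\defeq h_{12}h_{21}h_{13}$. (I would restrict $P$ to a subsequence along which $P^{(1-\alpha)/2}\in\Zc^+$, which is harmless since the GDoF is an asymptotic quantity.)

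Next, decode $x_s$ from $y_1$ by a nearest-codeword rule, treating $\tilde z_1$ as noise. Everything hinges on a lower bound for the minimum distance $d_{\min}\defeq\min|\sqrt{P^{1-\alpha}}\,g_0\Delta q_0+g_1\Delta q_1|$, the minimum over nonzero integer pairs with $|\Delta q_0|\le 2Q_c$, $|\Delta q_1|\le 4Q_c$. When $\Delta q_0=0$ this is $\ge g_1\ge 1$; when $\Delta q_0\ne 0$ I would invoke Lemma~\ref{lm:NMb2} with $A_0=P^{(1-\alpha)/2}$, $A_1=1$, $Q_0=2Q_c$, $Q_1=4Q_c$, $\tau=2$ and threshold $\beta=\delta P^{-(3\alpha/4-1/2)}$ (here $3\alpha/4-1/2\ge 0$ exactly because $\alpha\ge 2/3$, so $\beta\in(0,1]$). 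The smallest of the four quantities in the Lemma's bound is $Q_1\tau/A_0=O(P^{\,3\alpha/4-1/2-\epsilon/2})$, so the outage set $\Ho$ on which $d_{\min}<\delta P^{-(3\alpha/4-1/2)}$ satisfies $\mathcal{L}(\Ho)\le C\delta\,P^{-\epsilon/2}\to 0$ --- precisely \eqref{eq:distancegeqe}--\eqref{eq:lebesque11e} in the case $\alpha=3/4$. Outside $\Ho$, consecutive received codewords $\{2\gamma P^{\alpha/4+\epsilon/2}x_s\}$ are separated by at least $2\gamma\delta\,P^{(1-\alpha)/2+\epsilon/2}$, which beats $\sqrt{\E|\tilde z_1|^2}=O(P^{(1-\alpha)/2})$ by the polynomial margin $P^{\epsilon/2}$; because the problem is one-dimensional, the decoding error is at most $\text{Pr}\bigl[\,|\tilde z_1|\ge\tfrac{1}{2}\cdot(\text{spacing})\,\bigr]$ with no union bound over codewords, and Chebyshev's inequality (using $\E|\tilde z_1|^2\le\kappa P^{1-\alpha}$) bounds this by $O(P^{-\epsilon})\to 0$.

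Once $x_s$ is correctly recovered, $d_{\min}>0$ forces the map $(q_0,q_1)\mapsto x_s$ to be injective on the integer box, so $q_0$ (hence $v_{1,c}$) and $q_1$ are obtained exactly --- this is the ``signal separation by rational independence'' step. Subtracting the now-known top contribution $2\gamma P^{\alpha/4+\epsilon/2}x_s$ from $y_1$ leaves $\sqrt{P^{1-\alpha}}\,g_0v_{1,p}+g_1v_{2,p}+z_1$, in which the spacing of the $v_{1,p}$ stream is $g_0\gamma\,P^{\epsilon/2}\to\infty$ (using $v_{1,p}=(\gamma/Q_p)q_p$ with $Q_p=P^{(1-\alpha-\epsilon)/2}$) while the residual $g_1v_{2,p}+z_1$ has bounded second moment; a threshold rule recovers $v_{1,p}$ with vanishing error. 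Since all the error events above are finitely many and each vanishes, \eqref{eq:error1c1p341} holds for every $\{h_{k\ell}\}\notin\Ho$ with $\mathcal{L}(\Ho)\to 0$ as $P\to\infty$, i.e.\ for almost all channel coefficients; applying the same argument to $y_2$ completes the proof.

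The crux --- the only genuinely nontrivial step --- is the minimum-distance estimate of the second paragraph: one must produce the sharp exponent $3\alpha/4-1/2$ (which is $\ge 0$ exactly on $[2/3,1]$, explaining the case boundary), identify the correct smallest term among the four in Lemma~\ref{lm:NMb2}, and verify with the constellation exponents of Table~\ref{tab:para} that the resulting outage measure really tends to zero. The accompanying bookkeeping --- simultaneously balancing the scaling factor $2\gamma P^{\alpha/4+\epsilon/2}$, the effective-noise exponent $1-\alpha$, and the threshold $\delta P^{-(3\alpha/4-1/2)}$ so that a strictly positive margin $P^{\epsilon/2}$ survives --- is routine once the exponents are lined up, but it is the place where all the parameter choices in Table~\ref{tab:para} must be checked at once.
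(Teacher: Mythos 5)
Your proposal takes essentially the same route as the paper: rewrite $y_1$ as a scaled integer combination $x_s=\sqrt{P^{1-\alpha}}\,g_0 q_0+g_1 q_1$ plus bounded effective noise of power $O(P^{1-\alpha})$, invoke Lemma~\ref{lm:NMb2} to get a minimum distance $d_{\min}\geq\delta P^{-(3\alpha/2-1)/2}$ (your exponent $3\alpha/4-1/2$ is the same quantity) outside an outage set $\Ho$ of vanishing Lebesgue measure, then decode $x_s$ by nearest neighbor and peel off $v_{1,p}$. The paper's Lemma~\ref{lm:distance3432} is exactly your second paragraph, and your identification of $Q_1\tau/A_0$ as the binding term in the four-way minimum is correct.

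There is one concrete slip: you invoke Lemma~\ref{lm:NMb2} with $\tau=2$, but $g_0=h_{11}h_{23}h_{12}$ and $g_1=h_{12}h_{21}h_{13}$ are each a product of \emph{three} coefficients in $(1,2]$, so $(g_0,g_1)\in(1,8]^2$ and the lemma must be applied with $\tau=8$. This does not change the exponent $P^{-\epsilon/2}$ in the outage-measure bound, only the constant (the paper's $1792=32\tau(\tau-1)$ with $\tau=8$); but as written the hypothesis of Lemma~\ref{lm:NMb2} is violated and the bound would not apply. Two further small remarks: (i) your restriction to a subsequence with $P^{(1-\alpha)/2}\in\Zc^+$ is unnecessary --- the paper observes that $P=\max_k 2^{2m_{kk}}$ and $\sqrt{P^{\alpha_{k\ell}}}=2^{m_{k\ell}}$ force $\sqrt{P^{1-\alpha}}\in\Zc^+$ automatically; (ii) you use Chebyshev where the paper splits off the Gaussian component and uses a ${\bf Q}$-function tail bound. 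Your bound is weaker (polynomial vs.\ exponential decay) but entirely sufficient here, since the lemma only asserts the error probability vanishes. Finally, the passage from small $\Lc(B)$ in the $(g_0,g_1)$ coordinates to small $\Lc(\Ho)$ in the $\{h_{k\ell}\}$ coordinates, which the paper does via an explicit change-of-variables integral, is asserted but not carried out in your writeup; it does need the one-line Jacobian bound $1/(h_{11}h_{12}^2 h_{21})\leq 1$ that the paper records.
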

 \begin{proof}
In this proof  we use the approaches of noise removal and signal separation. The full details are described in Section~\ref{sec:rateerror341}.
  \end{proof}

By combining \eqref{eq:rateAWGNIC1405341}, \eqref{eq:rateAWGNIC333341} and Lemma~\ref{lm:rateerror341},   $\Imu(v_1; y_1)$ can be lower bounded by
  \begin{align}
  \quad \Imu(v_1; y_1)  \geq   \frac{1- \alpha/2  - 2\epsilon}{2} \log P + o(\log P)      \label{eq:rateAWGNIC17762341}  
 \end{align}
 for  almost all channel coefficients  $\{h_{k\ell}\} \in (1, 2]^{2\times 3}$. 
By following the steps in \eqref{eq:rateAWGNIC18374}-\eqref{eq:rateAWGNIC77364},  $\Imu(v_1; y_2 | v_2 )$ can be bounded  as
   \begin{align}
  \Imu(v_1; y_2 | v_2 )     \leq  & \log (2\sqrt{65})  .  \label{eq:rateAWGNIC9562341}  
 \end{align}
Finally,  with the results in \eqref{eq:rateAWGNIC17762341}  and \eqref{eq:rateAWGNIC9562341}, the secure rate $R_1$ is lower bounded by 
\begin{align}
R_1   =    \Imu(v_1; y_1)\! -\! \Imu(v_1; y_2 | v_2 )   \!\geq\!   \frac{1- \alpha/2 - 2\epsilon}{2} \log P \!+\! o(\log P)   \non 
\end{align}
which implies the following secure GDoF $d_1 =  1- \alpha/2$, as well as $d_2 =  1- \alpha/2$  due to the symmetry,  for  almost all  the channel coefficients  $\{h_{k\ell}\} \in (1, 2]^{2\times 3}$, in this  case with  $2/3 \leq \alpha \leq  1$.

\subsection{Rate analysis when $1 \leq  \alpha \leq 2$   \label{sec:CJscheme12}}

 The rate analysis for this case  also uses  the approaches of noise removal and signal separation.  For this case,  the parameters are designed as 
 \begin{align}
\beta_{v_{1,c}}&= \beta_{v_{2,c}} =0,        \quad    \lambda_{v_{1,c}} = \lambda_{v_{2,c}}= \alpha/2 - \epsilon     \label{eq:para13211} \\
  \beta_{v_{1,p}}&=\beta_{v_{2,p}}= \infty,  \  \   \lambda_{v_{1,p}}= \lambda_{v_{2,p}}  = 0  \label{eq:para2457295} \\
    \beta_{u_3}& = 0 ,\quad  \quad\quad  \quad  \quad  \lambda_{u_3}= \alpha/2 - \epsilon.   \label{eq:para13222} 
 \end{align}
In this case, the transmitted signals are designed as  
  \begin{align}
   x_1  =  \!    h_{2 3}  h_{1 2} v_{1,c} ,   \ 
   x_2  =  \!    h_{1 3}  h_{2 1} v_{2,c},   \  
      x_3 =   \!  h_{1 2}  h_{21}  u_{3}.   \label{eq:xvkkk1132}  
 \end{align}
Then the received signals at the receivers take the forms  as in \eqref{eq:yvk122} and \eqref{eq:yvk222}.

\begin{figure}[t!]
\centering
\includegraphics[width=4cm]{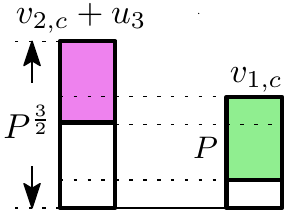}
 \vspace{-.05 in}
  \caption{Rate  and power description at receiver~1    when $\alpha =  3/2$.}
\label{fig:allocation_d32}
\end{figure}

In the following we will bound the  secure rates expressed in \eqref{eq:Rk623J} and \eqref{eq:Rk623bJ}. 
For this case, $v_k$ is designed as $v_k =    v_{k,c} ,  k=1,2$.   
 We can estimate $v_{k,c}$  from $y_k$ by using the approaches of noise removal and signal separation, $k=1,2$.   Lemma~\ref{lm:rateerror132}  presents a result on the error probability for this estimation.

  \begin{lemma}  \label{lm:rateerror132}
 When $1 \leq \alpha \leq 2$, given the signal design in \eqref{eq:constellationGsym1}-\eqref{eq:constellationGsym1u} and \eqref{eq:para13211}-\eqref{eq:xvkkk1132}, then for almost all channel coefficients  $\{h_{k\ell}\} \in (1, 2]^{2\times 3}$,  the error probability of estimating  $v_{k,c}$ from $y_k$, $k=1,2$,  is
 \begin{align}
 \text{Pr} [  v_{k,c} \neq \hat{v}_{k,c}  ]  \to 0         \quad \text {as}\quad  P\to \infty.    \label{eq:error1c1p132}
 \end{align}
 \end{lemma}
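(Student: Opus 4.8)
The plan is to transplant, almost verbatim, the noise‑removal and signal‑separation argument already set up for the case $2/3\le\alpha\le 1$ (Lemma~\ref{lm:rateerror341} and Section~\ref{sec:rateerror341}), exploiting the fact that the present regime is structurally \emph{simpler}: with the choice $\beta_{v_{1,p}}=\beta_{v_{2,p}}=\infty$ in \eqref{eq:para13211}--\eqref{eq:para13222} there are no private sub‑signals, so the channel inputs reduce to the pure cooperative‑jamming form \eqref{eq:xvkkk1132}, the received signals are \eqref{eq:yvk122}--\eqref{eq:yvk222}, and the only impairment at receiver~$k$ is the genuine noise $z_k\sim\normal(0,1)$ — there is no effective noise whose power grows with $P$. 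By the symmetry of the model it suffices to estimate $v_{1,c}$ from $y_1$.

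First I would put $y_1$ of \eqref{eq:yvk122} into integer‑combination form. Setting $g_0\defeq h_{11}h_{23}h_{12}$ and $g_1\defeq h_{12}h_{21}h_{13}$, and writing $v_{1,c}=\tfrac{2\gamma}{Q}q_0$, $v_{2,c}+u_3=\tfrac{2\gamma}{Q}q_1$ with $Q=P^{(\alpha/2-\epsilon)/2}$, $q_0\in\Zc\cap[-Q,Q]$ and $q_1\in\Zc\cap[-2Q,2Q]$ (here $v_{2,c}$ and $u_3$ are aligned along the single direction $g_1$, so only their integer sum $q_1$ matters), the useful part of $y_1$ is a fixed multiple of $x_s\defeq \sqrt{P}\,g_0q_0+\sqrt{P^{\alpha}}\,g_1q_1$. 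Decoding then runs in the two familiar steps: \emph{(i) noise removal} — recover $x_s$ from $y_1$ by nearest‑point decoding, treating $z_1$ as noise; \emph{(ii) signal separation} — read off $q_0$, hence $v_{1,c}=\tfrac{2\gamma}{Q}q_0$, from the decoded point, which is legitimate precisely because the map $(q_0,q_1)\mapsto x_s$ is injective on the relevant integer box for almost all $\{h_{k\ell}\}$, by the rational independence of $g_0$ and $g_1$.

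Both steps are controlled by the minimum distance $d_{\min}$ of the $x_s$‑constellation, i.e.\ the minimum of $|\sqrt{P}\,g_0\Delta q_0+\sqrt{P^{\alpha}}\,g_1\Delta q_1|$ over nonzero integer differences in the corresponding box. I would bound it exactly as in Section~\ref{sec:rateerror341}: apply Lemma~\ref{lm:NMb2} with $\tau$ a fixed upper bound for the products of the $h_{k\ell}$'s, with $\beta$ proportional to the target distance, with $Q_0,Q_1$ of order $Q=P^{\alpha/4-\epsilon/2}$, and with the amplitudes $A_0,A_1$ of orders $\sqrt{P}$ and $\sqrt{P^{\alpha}}$ (treating, as there, the powers of $P$ as integers along the relevant subsequence). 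Since $\alpha\ge 1$, the dominating term in the minimum of Lemma~\ref{lm:NMb2} is $\tfrac{Q_0\tau}{A_1}$, so choosing $\beta$ of order $\delta P^{\alpha/4}$ gives $d_{\min}\ge\delta P^{\alpha/4}$ outside an outage set $\Ho\subseteq(1,2]^{2\times 3}$ with $\mathcal{L}(\Ho)$ at most a constant multiple of $\delta P^{-\epsilon/2}$, hence $\mathcal{L}(\Ho)\to 0$ as $P\to\infty$. Consequently the useful part of $y_1$ has constellation minimum distance of order $\gamma\delta P^{\epsilon/2}$, which dominates the unit noise standard deviation by a factor polynomial in $P$, while the number of constellation points is only polynomial in $P$; the Gaussian tail then forces the error probability of step~(i) to vanish for every $\{h_{k\ell}\}\notin\Ho$. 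Step~(ii) is immediate, and combining with the standard limiting argument over the $P$‑dependent outage sets (as in Section~\ref{sec:rateerror341}) gives \eqref{eq:error1c1p132} for almost all $\{h_{k\ell}\}$; the estimation of $v_{2,c}$ from $y_2$ is identical by symmetry.

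The main obstacle is the parameter bookkeeping in the application of Lemma~\ref{lm:NMb2}: one must choose $\beta$, $Q_0,Q_1$ and $A_0,A_1$ so that, uniformly in $\alpha\in[1,2]$, the outage measure $\mathcal{L}(\Ho)$ still vanishes as $P\to\infty$ while the retained $d_{\min}$ still exceeds the (now merely unit) noise floor by a growing factor, and so that the design is consistent at the boundary $\alpha=1$ with Lemma~\ref{lm:rateerror341}. Everything else — verifying $\E|x_k|^2\le 1$ for the inputs in \eqref{eq:xvkkk1132}, the entropy bookkeeping $\Hen(v_{k,c})=\tfrac{\alpha/2-\epsilon}{2}\log P+o(\log P)$, and the resulting secure GDoF $d_1=d_2=\alpha/2$ consistent with Theorem~\ref{thm:IChGDoF} — is routine and mirrors the earlier cases.
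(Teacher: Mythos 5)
Your approach is essentially the one the paper takes in Appendix~\ref{sec:rateerror132}: noise removal to decode the integer combination carried by $y_1$, signal separation via rational independence of $g_0,g_1$, and a minimum-distance bound obtained from Lemma~\ref{lm:NMb2} (the paper packages this step as Lemma~\ref{lm:distance132}, the analogue of Lemma~\ref{lm:distance3432}). There is, however, one genuine flaw in your invocation of Lemma~\ref{lm:NMb2}. You take $A_0,A_1$ of orders $\sqrt{P},\sqrt{P^{\alpha}}$ and $\beta$ of order $\delta P^{\alpha/4}$, but Lemma~\ref{lm:NMb2} requires $\beta\in(0,1]$, and $\delta P^{\alpha/4}>1$ for all large $P$. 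This hypothesis is not decorative: the lemma's proof uses $\beta\le 1$ to conclude $\mathcal{L}(B(q_0,0))=\mathcal{L}(B(0,q_1))=0$ via $|A_0 g_0 q_0|\ge 1\ge\beta$, and to justify the reduction to the case $A_0|q_0|\le\tau A_1|q_1|$ via the chain $A_0|q_0|-\tau A_1|q_1|\ge 1\ge\beta$.

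The paper avoids this by dividing out the smaller amplitude $\sqrt{P}$ before applying Lemma~\ref{lm:NMb2}; equivalently, it applies the lemma to $\bar{x}_s=g_0 q_0+\sqrt{P^{\alpha-1}}\,g_1 q_1$ with $\bar{A}_0=1$, $\bar{A}_1=\sqrt{P^{\alpha-1}}$, and $\bar{\beta}=\delta P^{-(1-\alpha/2)/2}\le 1$, which is valid since $\alpha\le 2$ and $\delta\le 1$. If you normalize in the same way --- write $x_s=\sqrt{P}\,\bar{x}_s$ and apply the lemma to $\bar{x}_s$ --- the rest of your argument goes through verbatim: the outage measure bound of order $\delta P^{-\epsilon/2}$, the lower bound $d_{\min}(x_s)\ge\delta P^{\alpha/4}$ outside the outage set, the resulting constellation separation of order $\gamma\delta P^{\epsilon/2}$ against unit noise, and the symmetry argument for $k=2$.
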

 \begin{proof}
In this proof  we use the approaches of noise removal and signal separation. The full details are described in Appendix~\ref{sec:rateerror132}.
  \end{proof}

In the following we will show that  by using Lemma~\ref{lm:rateerror132}, the secure GDoF pair $(d_1 =  \alpha/2 ,  d_2 = \alpha/2  )$ can be achieved  for  almost all the channel coefficients  $\{h_{k\ell}\} \in (1, 2]^{2\times 3}$ by the proposed cooperative jamming scheme. 
In this case with  $v_{1,c} \in \Omega ( \xi = 2 \gamma \cdot \frac{ 1}{Q} ,   \   Q =  P^{ \frac{\alpha/2 - \epsilon }{2}} )$ and  $v_1 =   v_{1,c}$,  the rate of $v_{1} $ is computed as   
  \begin{align}
 \Hen(v_{1}) =  \Hen(v_{1,c}) &=  \log (2 \cdot P^{ \frac{\alpha/2 - \epsilon }{2}} +1) .  \label{eq:rateAWGNIC1c132}     
 \end{align}
and then, $\Imu(v_1; y_1)$ can be  bounded by
  \begin{align}
  \Imu(v_1; y_1)    &\geq  \bigl( 1 -   \text{Pr} [   v_{1,c} \neq \hat{v}_{1,c}  ] \bigr)   \cdot \Hen(v_{1})  - 1  \label{eq:rateAWGNIC1405322009}       \\
        & =  \frac{ \alpha/2 - \epsilon}{2} \log P + o(\log P)    \label{eq:rateAWGNIC17762132}  
 \end{align}
 for  almost all the channel coefficients  $\{h_{k\ell}\} \in (1, 2]^{2\times 3}$, where  
 \eqref{eq:rateAWGNIC1405322009} follows from  the steps in \eqref{eq:rateAWGNIC1202e}-\eqref{eq:rateAWGNIC1405e};
  and \eqref{eq:rateAWGNIC17762132}  results from  \eqref{eq:rateAWGNIC1c132} and Lemma~\ref{lm:rateerror132}.
By following the steps related to \eqref{eq:rateAWGNIC9562322},  $\Imu(v_1; y_2 | v_2 )$  can be bounded as
  \begin{align}
\Imu(v_1; y_2 | v_2 )   \leq   1 .     \label{eq:rateAWGNIC9562132}  
 \end{align}
The final step is to  incorporate \eqref{eq:rateAWGNIC17762132}  and \eqref{eq:rateAWGNIC9562132} into \eqref{eq:Rk623J}. It then gives the lower bound on $R_1$
\begin{align}
R_1  & =    \Imu(v_1; y_1) - \Imu(v_1; y_2 | v_2 )      \non \\
& \geq  \frac{ \alpha/2  - \epsilon}{2} \log P + o(\log P)  \label{eq:lbounditfinal1132}
\end{align}
which implies the following secure GDoF $d_1 =  \alpha/2$, as well as $d_2 =  \alpha/2$  due to the symmetry,  for  almost all  the channel coefficients  $\{h_{k\ell}\} \in (1, 2]^{2\times 3}$, in this  case with  $1 \leq \alpha \leq  2$.

\section{Proof of Lemma~\ref{lm:NMb2} }   \label{sec:lm:NMb2}

In this section, we will prove Lemma~\ref{lm:NMb2}.  
Let $\beta \in (0,1]$, $\tau  \in \Zc^+$ and $ \tau >1$, $A_0, A_1, Q_0$ and $Q_1  \in \Zc^+$. Define the event
\[  B(q_0, q_1)  \defeq \{ (g_0, g_1)  \in (1, \tau]^2 :  |  A_0 g_0 q_0 + A_1 g_1 q_1  | < \beta \}    \]
and set 
\begin{align*}
B  \defeq   \bigcup_{\substack{ q_0, q_1 \in \Zc:  \\  (q_0, q_1) \neq  0,  \\  |q_k| \leq Q_k  \ \forall k }}  B(q_0, q_1) . 
\end{align*}
For $(q_0, q_1 ) \in  \{  (q'_0, q'_1):  (q'_0, q'_1) \neq  0, q'_0, q'_1 \in  \Zc,  |q'_0| \leq Q_0,   |q'_1| \leq Q_1  \}$, we will consider the following three cases:  $(q_0 \not = 0, q_1 \not = 0)$,  $(q_0 \not = 0, q_1 =0)$, and $(q_0 = 0, q_1 \not =0)$.

\begin{figure}
\centering
\includegraphics[width=4cm]{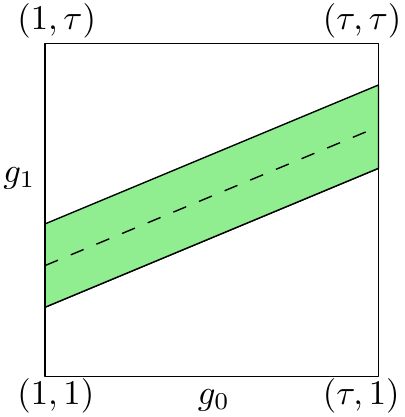}
\caption{The illustration of the set $B(q_0, q_1) \in (1, \tau]^2$ for the case of $(q_0 \not = 0, q_1 \not = 0)$. In this case, there is one strip with slope $- A_0 q_0/ (A_1 q_1)$ and width $ 2\beta/(A_1|q_1|)$.}
\label{fig:Strip}
\end{figure} 

Let us consider  the  case with $(q_0 \not = 0, q_1 \not = 0)$ first.  
In this case, assuming that $A_0 |q_0| \geq \tau A_1 |q_1|  + 1$,  then it gives 
\begin{align}
|A_0 g_0 q_0 + A_1g_1 q_1| &\geq A_0 g_0 |q_0| - A_1g_1 |q_1| \non \\
& \geq A_0 |q_0| - \tau A_1 |q_1|  \non \\
& \geq 1 \non \\
& \geq \beta
\end{align}
which contradicts the event  $|  A_0 g_0 q_0 + A_1 g_1 q_1  | < \beta$  defined in  $B(q_0, q_1) $.  Therefore,  without loss of generality we  will consider 
\begin{align}
A_0 |q_0| \leq \tau A_1 |q_1|   \label{lemmacondition1}
\end{align}
  for the  case with $(q_0 \not = 0, q_1 \not = 0)$.

 For the first case with $(q_0 \not = 0, q_1 \not = 0)$ ,  as shown in  Fig.~\ref{fig:Strip},   the set $B(q_0, q_1)$ has one strip with slope $- A_0 q_0/ (A_1 q_1)$ and width $ 2\beta/(A_1|q_1|)$.
The area of this set  is upper bounded by
\begin{align}
\Lc(B(q_0, q_1)) \leq (\tau -1) \cdot \frac{2\beta}{A_1|q_1|} = \frac{2 (\tau -1)\beta}{A_1|q_1|}. \label{lemmaarea1}
\end{align}

For  the second case with $(q_0 \not = 0, q_1 =0)$,  it holds true that $|A_0 g_0 q_0 + A_1 g_1 q_1  | = | A_0 g_0 q_0| \geq 1 \geq \beta$ for any $(g_0, g_1)  \in (1, \tau]^2 $, which implies that 
\begin{align}
\Lc (B(q_0, 0)) = 0  \label{lemmaq10}
\end{align}
based on the definition of  $B(q_0, q_1)$. 
Similarly, for the last case with  $(q_0 = 0, q_1 \not =0)$, it holds true that
\begin{align}
\Lc (B(0, q_1)) = 0.  \label{lemmaq00}
\end{align}

 By combining \eqref{lemmacondition1}-\eqref{lemmaq00},  $\Lc(B)$ can be upper bounded by
 \begin{align}
 \Lc(B) 
&= \Lc \big(    \bigcup_{\substack{ q_0, q_1 \in \Zc:  \\  (q_0, q_1) \neq  0,  \\  |q_k| \leq Q_k  \ \forall k }}   \quad B(q_0, q_1) \big) \non \\ 
& \leq \sum_{\substack{ q_0, q_1 \in \Zc:  \\  (q_0, q_1) \neq  0,  \\  |q_k| \leq Q_k  \ \forall k }}  \quad \Lc (B(q_0, q_1) ) \non\\  
& =  \sum_{q_1 \in  \Zc \setminus \{0\}: |q_1| \leq Q_1} \Lc (B(0, q_1))   +  \sum_{q_0 \in   \Zc \setminus \{0\}: |q_0| \leq Q_0} \Lc (B(q_0, 0)) \non \\ 
&\quad +  \sum_{\substack{ q_1 \in  \Zc \setminus \{0\}: \\ |q_1| \leq Q_1} } \sum_{\substack{ q_0 \in  \Zc \setminus \{0\}: \\  |q_0| \leq Q_0 \\ A_0 |q_0| \leq \tau  A_1|q_1|}} \ \Lc (B(q_0, q_1))  \non \\
& =   \sum_{\substack{ q_1 \in  \Zc \setminus \{0\}: \\ |q_1| \leq Q_1} } \sum_{\substack{ q_0 \in  \Zc \setminus \{0\}: \\  |q_0| \leq Q_0 \\ A_0 |q_0| \leq \tau  A_1|q_1|}} \ \Lc (B(q_0, q_1))  \label{lemmabound1}
\end{align}
where \eqref{lemmabound1} results from \eqref{lemmaq10} and  \eqref{lemmaq00}.
Note that 
\begin{align}
& \bigl| \{q_0 \in  \Zc \setminus \{0\}:  |q_0| \leq Q_0, A_0 |q_0| \leq \tau  A_1|q_1| \} \bigr|  \non \\ 
& \leq 2 \min \Bigl\{ Q_0, \frac{\tau  A_1|q_1|}{A_0}\Bigr\} .   \label{lemmabound8266}
\end{align}
With this,  we can bound the  term in \eqref{lemmabound1} as
\begin{align} 
 &\sum_{\substack{ q_1 \in  \Zc \setminus \{0\}: \\ |q_1| \leq Q_1} } \sum_{\substack{ q_0 \in  \Zc \setminus \{0\}: \\  |q_0| \leq Q_0 \\ A_0 |q_0| \leq \tau  A_1|q_1|}} \ \Lc (B(q_0, q_1))  \non \\ 
&\leq 2 Q_1 \cdot 2 \min \Bigl\{Q_0, \frac{\tau  A_1|q_1|}{A_0} \Bigr\} \cdot \frac{2 (\tau -1)\beta}{A_1|q_1|}  \label{lemmabound28352} \\ 
& = 8 (\tau -1)  \beta  \min \Bigl\{ \frac{Q_1 Q_0}{A_1|q_1|}, \frac{Q_1 \tau}{A_0}\Bigr\}  \non  \\ 
& \leq 8 (\tau -1)   \beta  \min \Bigl\{\frac{ Q_1 Q_0}{A_1}, \frac{Q_1 \tau}{A_0}\Bigr\}   \label{lemmabound11}
\end{align}
where \eqref{lemmabound28352} follows from  \eqref{lemmaarea1} and \eqref{lemmabound8266}.
Therefore, \eqref{lemmabound1} can be further upper bounded by
\begin{align}
\Lc (B) \leq 8 (\tau -1)   \beta  \min \Bigl\{ \frac{ Q_1 Q_0 }{A_1}, \frac{Q_1 \tau}{A_0}\Bigr\} . \label{lemmaboundfinal1}
\end{align}

Due to symmetry, by interchanging the roles of $A_0$ and $A_1$, and interchanging the roles of $Q_0$ and $Q_1$, $\Lc (B)$ can also be upper bounded by 
\begin{align}
\Lc(B) \leq 8 (\tau -1)  \beta  \min \Bigl\{ \frac{Q_0 Q_1}{ A_0}, \frac{Q_0 \tau}{A_1} \Bigr\}.
\label{lemmaboundfinal2}
\end{align}
By combining the results in \eqref{lemmaboundfinal1} and \eqref{lemmaboundfinal2}, we finally bound $\Lc(B) $ as 
\begin{align}
\Lc (B )  \leq   8 (\tau -1)  \beta \min\Bigl\{ \frac{ Q_1 Q_0}{A_1},  \frac{Q_0 Q_1}{A_0},   \frac{Q_0 \tau}{A_1},   \frac{Q_1 \tau}{A_0}\Bigr\}.
\end{align}

\section{Proof of Lemma~\ref{lm:rateerror2334}  \label{sec:rateerror2334} }

This section provides the proof of Lemma~\ref{lm:rateerror2334}. In this proof we will use \cite[Lemma~1]{ChenIC:18}  described below.
\begin{lemma}  \label{lm:AWGNic} \cite[Lemma~1]{ChenIC:18}
Consider the channel model $ y=  \sqrt{P^{\alpha_1}}  h x + \sqrt{P^{\alpha_2}} g + z$, 
where $x \in \Omega (\xi,  Q)$  is the random variable,  $z\sim \mathcal{N}(0, \sigma^2)$, and $g  \in  \Sc_{g}$ is a discrete random variable such that \[|g | \leq  g_{\max}, \quad \forall g \in  \Sc_{g}\] for a given set $\Sc_{g} \subset \Rc$.  In the above model $g_{\max}$, $\sigma$ and $h$  are positive and finite constants  independent of $P$, and  $\alpha_1$ and $\alpha_2$ are  two positive parameters  such that  $\alpha_1 - \alpha_2 >0$. By setting  $Q$ and $\xi$ such that 
\begin{align}
Q =  \frac{P^{\frac{\bar{\alpha}}{2}} \cdot h \gamma }{2 g_{\max} },    \quad  \quad    \xi =  \gamma \cdot \frac{ 1}{Q},   \quad   \quad   \forall \bar{\alpha} \in (0, \alpha_1 -\alpha_2)  \label{eq:settingAWGNIC}                           
\end{align}
where $\gamma > 0$ is a finite constant independent of $P$, then 
the probability of error for decoding a symbol $x$ from $y$ is \[ \text{Pr} (e) \to 0  \quad  \text{as} \quad   P\to \infty.\] 
\end{lemma}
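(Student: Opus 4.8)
The plan is to decode $x$ by a nearest-lattice-point rule and bound the error probability by a Gaussian tail, treating the bounded term $\sqrt{P^{\alpha_2}}g$ as a small deterministic perturbation of the desired constellation. First I would record the geometry of the useful signal: with $Q$ and $\xi$ chosen as in \eqref{eq:settingAWGNIC}, the noiseless, interference-free values $\sqrt{P^{\alpha_1}}hx$ lie in the scaled integer lattice $\{\sqrt{P^{\alpha_1}}h\xi a :\ a\in\Zc,\ |a|\le Q\}$, whose consecutive points are separated by $d\defeq \sqrt{P^{\alpha_1}}h\xi = \sqrt{P^{\alpha_1}}h\gamma/Q$. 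Substituting $Q = P^{\bar\alpha/2}h\gamma/(2g_{\max})$ gives $d = 2g_{\max}\sqrt{P^{\alpha_1-\bar\alpha}}$, hence a decision half-distance $d/2 = g_{\max}\sqrt{P^{\alpha_1-\bar\alpha}}$. For $P$ large enough $Q\ge 1$, so the constellation is non-degenerate, and the boundary points of the truncated lattice only enlarge the decision regions, so the bound below is uniform over the transmitted symbol; the receiver knows $h$ and $\alpha_1$, so it can form this lattice and round $y$ to the nearest point.

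Next I would control the effective noise $n\defeq \sqrt{P^{\alpha_2}}g + z$. The hypothesis $|g|\le g_{\max}$ for all $g\in\Sc_g$ gives the deterministic estimate $|\sqrt{P^{\alpha_2}}g|\le g_{\max}\sqrt{P^{\alpha_2}}$, and since $\bar\alpha\in(0,\alpha_1-\alpha_2)$ the exponent gap $\delta\defeq \alpha_1-\bar\alpha-\alpha_2$ is strictly positive, so $g_{\max}\sqrt{P^{\alpha_2}} = (d/2)\,P^{-\delta/2}\le d/4$ for every $P$ with $P^{\delta/2}\ge 2$. For nearest-point decoding of the (truncated) lattice, an error forces $|n|\ge d/2$, hence it can occur only on the event $\{|z|\ge d/2 - |\sqrt{P^{\alpha_2}}g|\}\subseteq \{|z|\ge d/4\} = \{|z|\ge \frac{1}{2} g_{\max}\sqrt{P^{\alpha_1-\bar\alpha}}\}$; this containment holds regardless of any dependence between $x$ and $g$, because $|\sqrt{P^{\alpha_2}}g|$ is bounded pointwise, independently of $P$ only through $g_{\max}$.

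I would then finish with the Gaussian tail bound $\text{Pr}(|z|\ge t)\le 2e^{-t^2/(2\sigma^2)}$ applied with $t = \frac{1}{2} g_{\max}\sqrt{P^{\alpha_1-\bar\alpha}}$:
\[
\text{Pr}(e)\ \le\ \text{Pr}\Bigl(|z|\ge \tfrac{1}{2}g_{\max}\sqrt{P^{\alpha_1-\bar\alpha}}\Bigr)\ \le\ 2\exp\!\Bigl(-\frac{g_{\max}^2\, P^{\alpha_1-\bar\alpha}}{8\sigma^2}\Bigr).
\]
Since $\bar\alpha<\alpha_1-\alpha_2\le\alpha_1$ forces $\alpha_1-\bar\alpha>0$, the exponent tends to $-\infty$ and $\text{Pr}(e)\to 0$ as $P\to\infty$, which is the claim.

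There is no deep obstacle; this is the standard ``bounded discrete interference is a vanishing fraction of the minimum distance, then invoke the Gaussian tail'' argument. The only points requiring care are (i) verifying the constant so that $|\sqrt{P^{\alpha_2}}g|$ stays below, say, $d/4$ for all sufficiently large $P$ --- which uses precisely the strict inequality $\bar\alpha<\alpha_1-\alpha_2$ --- and (ii) confirming $\alpha_1-\bar\alpha>0$ so the Gaussian tail genuinely vanishes. Had one preferred a union bound over the whole alphabet $\Sc_g$ rather than the pointwise estimate, one would additionally need $\Sc_g$ finite or of polynomially-bounded size; the pointwise bound sidesteps this, which is exactly why the lemma only requires $\Sc_g$ to be a bounded (discrete) set.
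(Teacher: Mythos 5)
Your proof is correct: the minimum-distance computation, the pointwise bound on the bounded interference term, and the Gaussian tail estimate are all valid, and the strict inequality $\bar\alpha<\alpha_1-\alpha_2$ is used exactly where it must be. The paper itself cites this lemma from \cite{ChenIC:18} without reproving it, but your argument is the same ``interference below a vanishing fraction of half the minimum distance, then ${\bf Q}$-function tail'' technique the paper deploys explicitly in Section~\ref{sec:rateerror341} (cf.\ \eqref{eq:error2256cc}--\eqref{eq:error9982cc}), so you are on the intended route.
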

\vspace{10pt}
We will provide the proof of Lemma~\ref{lm:rateerror2334} by focusing on the case  with $k=1$, as the case with $k=2$ can be proved in a similar way. 
In this proof, we estimate $v_{1,c}$,  $v_{2,c} + u_3$ and $v_{1,p}$ from $y_1$ expressed in \eqref{eq:yvk12334} by using a successive decoding method.
At first, $v_{1,c} \in \Omega ( \xi =  \gamma_{v_{1,c}} \cdot \frac{ 1}{Q} ,   \   Q =  P^{ \frac{ 2\alpha -1 - \epsilon }{2}} )$  will be estimated  by treating the other signals as noise.   Let us  rewrite   $y_1$ in \eqref{eq:yvk12334}  as 
\begin{align}
 y_1  =    \sqrt{P} h_{11}    h_{23} h_{12} v_{1,c}       +  \sqrt{P^{ \alpha}} g   +  z_{1}    \label{eq:yvk101}
\end{align}
where 
  \begin{align}
 g    \defeq       h_{12}h_{21} h_{13}  (   v_{2,c}  +  u_3)   +  \sqrt{P^{ 1 - 2\alpha}} h_{11}    h_{23} h_{12}  v_{1,p}  +  \sqrt{P^{  - \alpha}}  h_{12}h_{21} h_{13} v_{2,p} . \non
 \end{align}
For this case with  $1/2 < \alpha \leq 2/3$,  it is true that $ |g | \leq  4 \sqrt{2} $ for any realizations of $g$.
At this point, by using the result of Lemma~\ref{lm:AWGNic} it implies that   the probability of error for estimating $v_{1,c}$ from $y_1$  is 
 \begin{align}
\text{Pr} [v_{1,c} \neq \hat{v}_{1,c}] \to 0, \quad \text {as}\quad  P\to \infty.   \label{eq:error776}
\end{align}
The decoded $v_{1,c}$ can be removed from $y_1$, which can allow us to estimate $v_{2,c} + u_3 $  from  the following observation 
 \begin{align}
 y_1-  \sqrt{P} h_{11}    h_{23} h_{12} v_{1,c}    =       \sqrt{P^{ \alpha}}  h_{12}h_{21} h_{13}  (   v_{2,c}  +  u_3)    +   \sqrt{P^{ 1- \alpha }}  g'  +  z_{1}       \label{eq:yvk102}
\end{align}
 where   $ g'  \defeq   h_{11}    h_{23} h_{12}  v_{1,p}   +  \sqrt{P^{   \alpha -1 }}  h_{12}h_{21} h_{13} v_{2,p}$ and  $v_{2,c} + u_3 \in   2\cdot \Omega (\xi   =  \gamma_{v_{2,c}} \cdot \frac{ 1}{Q},   \   Q =  P^{ \frac{ 2\alpha -1 - \epsilon }{2}} )$  for $2\cdot \Omega (\xi,  Q)  \defeq   \{ \xi \cdot a :   \    a \in  \Zc  \cap [-2Q,   2Q]   \}$. It is true that $ |g' | \leq  2  \sqrt{2}$ for any realizations of $g'$.
Define  $\hat{s}_{vu}$ as the estimate of $s_{vu} \defeq v_{2,c} + u_3$. 
From the result of Lemma~\ref{lm:AWGNic}, it reveals that 
\begin{align}
\text{Pr} [s_{vu}  \neq  \hat{s}_{vu} |  v_{1,c} = \hat{v}_{1,c}] \to 0, \quad \text {as}\quad  P\to \infty. \label{eq:errorsvu}
\end{align}
which,  together with \eqref{eq:error776}, gives   
\begin{align}
 \text{Pr} [s_{vu}  \neq  \hat{s}_{vu} ] 
  \leq  \text{Pr} [s_{vu}  \neq  \hat{s}_{vu} |  v_{1,c} = \hat{v}_{1,c}] +  \text{Pr} [v_{1,c} \neq \hat{v}_{1,c}]        \to 0   \  \text {as } \ P \to \infty.    \label{eq:error887}
\end{align}

Similarly, by removing the decoded $v_{2,c} + u_3$ from $y_1$,   $v_{1,p} \in    \Omega (\xi   =\gamma_{v_{1,p}} \cdot \frac{ 1}{Q},   \   Q = P^{ \frac{ 1 - \alpha - \epsilon}{2}} )$ can be estimated  with vanishing error probability, i.e., 
  \begin{align}
 \text{Pr} [ v_{1,p} \neq  \hat{v}_{1,p}] &   \to 0    \quad \text {as }\quad  P \to \infty.   \label{eq:error2950}
\end{align}

By combining  \eqref{eq:error776} and \eqref{eq:error2950}, it gives that 
the  probability of error for estimating $\{v_{1,c}, v_{1,p} \}$ from $y_1$ is
\begin{align}
\text{Pr} [  \{ v_{1,c} \neq \hat{v}_{1,c} \} \cup  \{ v_{1,p} \neq \hat{v}_{1,p} \}  ]    \to   0         \quad \text {as}\quad  P\to \infty   \non 
\end{align}
which completes the proof for the case of $k=1$. 
Due to the symmetry,  the case of $k=2$ can be proved in a similar way.

\section{Proof of Lemma~\ref{lm:rateerror341}  \label{sec:rateerror341} }

The proof of Lemma~\ref{lm:rateerror341} is provided in this section.  
In this proof,  $v_{1,c}$,  $v_{2,c} + u_3$ and $v_{1,p}$ will be estimated from $y_1$  (see \eqref{eq:yvk12334}) by using the approaches of noise removal and signal separation. 
At first,  the following two symbols
  \begin{align}
v_{1,c} &\in \Omega ( \xi =  \gamma_{v_{1,c}} \cdot \frac{ 1}{Q} ,   \   Q =  P^{ \frac{ \alpha/2 - \epsilon }{2}} )    \non \\
v_{2,c}  +  u_{3}  &\in 2 \cdot \Omega ( \xi =  \gamma_{v_{2,c}} \cdot \frac{ 1}{Q} ,   \   Q =  P^{ \frac{ \alpha/2 - \epsilon }{2}} ) \non
 \end{align}
 will be estimated simultaneously from $y_1$  by treating the other signals as noise. 
 Let us rewrite  $y_1$ in \eqref{eq:yvk12334}  as 
\begin{align}
y_{1} & =  \sqrt{P} h_{11}    h_{23} h_{12} v_{1,c}  +  \sqrt{P^{ \alpha }} h_{12}h_{21} h_{13}  (   v_{2,c}  +  u_3)  + \tilde{z}_{1} \non\\
& = P^{{ \frac{ \alpha/2 + \epsilon }{2}}} \cdot  2\gamma \cdot  (  \sqrt{P^{ 1- \alpha }}  g_0 q_0 + g_1 q_1 )  + \tilde{z}_{1} \label{eq:rewritteny231}
\end{align}
where   $\tilde{z}_{1}   \defeq     \sqrt{P^{ 1 - \alpha}} h_{11}    h_{23} h_{12}  v_{1,p}  +    h_{12}h_{21} h_{13} v_{2,p}   +  z_{1}$ and 
 \[ g_0\defeq  h_{11}    h_{23} h_{12},     \quad  g_1\defeq   h_{12}h_{21} h_{13}  \] 
    \[ q_0  \defeq  \frac{Q_{\max}}{2\gamma} \cdot   v_{1,c}   ,    \quad  q_1  \defeq  \frac{Q_{\max}}{2\gamma} \cdot   (   v_{2,c}  +  u_{3}),  \quad Q_{\max} \defeq P^{ \frac{ \alpha/2 - \epsilon }{2}} \]
   for a given constant $\gamma  \in \bigl(0, \frac{1}{8\sqrt{2}}\bigr]$ (see \eqref{eq:gammadef}).
In this setting, $q_0, q_1 \in \Zc$, $|q_0| \leq Q_{\max}$,  $|q_1| \leq 2 Q_{\max}$, and $\sqrt{P^{1- \alpha}} \in \Zc^+$, based on our definitions $P \defeq \max_{k}\{ 2^{2 m_{kk}} \}$ and $\sqrt{P^{\alpha_{k\ell}}} =   2^{m_{k\ell}} ,  k, \ell =1,2$.

Let $\hat{q}_0$ and $ \hat{q}_1$ be the  estimates of  $q_0$ and  $ q_1$, respectively,  from the observation $y_1$ expressed in \eqref{eq:rewritteny231}.  Specifically, we use  an estimator that  minimizes
\[ |y_1 -  P^{{ \frac{ \alpha/2 + \epsilon }{2}}} \cdot  2\gamma \cdot  (  \sqrt{P^{ 1- \alpha }}  g_0 \hat{q}_0 + g_1 \hat{q}_1 ) |.   \]
We now consider the minimum distance
  \begin{align}
 d_{\min}  (g_0, g_1)    \defeq    \min_{\substack{ q_0, q_0' \in \Zc  \cap [- Q_{\max},    Q_{\max}]  \\  q_1, q_1' \in \Zc  \cap [- 2Q_{\max},   2 Q_{\max}]   \\  (q_0, q_1) \neq  (q_0', q_1')  }}  |  \sqrt{P^{ 1- \alpha }}  g_0  (q_0 - q_0') +  g_1 (q_1 - q_1')  | .    \label{eq:minidis111}
 \end{align}
 between the  signals generated by  $(q_0, q_1)$ and $(q_0', q_1') $.
 Later on, Lemma~\ref{lm:distance3432} (see below) shows that the minimum distance $d_{\min}$  is sufficiently large for almost all the channel  coefficients $\{h_{k\ell}\} \in (1, 2]^{2\times 3} $ when $P$ is large,  with the signal design in \eqref{eq:constellationGsym1}-\eqref{eq:constellationGsym1u}
and \eqref{eq:para34111}-\eqref{eq:xvkkk1341}. 
Let us now provide  Lemma~\ref{lm:distance3432}, the proof of which is based on the result of Lemma~\ref{lm:NMb2} (see Section~\ref{sec:example}).  
\begin{lemma}  \label{lm:distance3432}
Consider  the signal design in \eqref{eq:constellationGsym1}-\eqref{eq:constellationGsym1u} and \eqref{eq:para34111}-\eqref{eq:xvkkk1341}  in the  case  of $\alpha \in [2/3, 1]$. Let $\delta \in (0, 1]$ and  $\epsilon >0$.  Then the minimum distance $d_{\min}$ defined in \eqref{eq:minidis111} is bounded by
 \begin{align}
d_{\min}    \geq   \delta P^{- \frac{3\alpha/2 -1 }{2}}    \label{eq:distancegeq}
 \end{align}
for all  the channel  coefficients $\{h_{k\ell}\} \in (1, 2]^{2\times 3} \setminus \Ho$, where  $\Ho \subseteq (1,2]^{2\times 3}$ is an outage set and the Lebesgue measure of the outage set, denoted by $\mathcal{L}(\Ho)$,  satisfies  
 \begin{align}
\mathcal{L}(\Ho) \leq 1792 \delta   \cdot     P^{ - \frac{ \epsilon  }{2}} .    \label{eq:LebmeaB}
 \end{align}
 \end{lemma}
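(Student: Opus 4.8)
The plan is to recognize the event that $d_{\min}$ is small as membership of $(g_0,g_1)$ in the exceptional set $B$ of Lemma~\ref{lm:NMb2}, to bound the Lebesgue measure of that set by the lemma, and then to push that bound back through the channel-to-$(g_0,g_1)$ map so as to obtain a bound on $\Lc(\Ho)$.

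First I would rewrite the minimum distance. Setting $\Delta q_0 \defeq q_0 - q_0'$ and $\Delta q_1 \defeq q_1 - q_1'$ in \eqref{eq:minidis111}, $d_{\min}(g_0,g_1)$ becomes the minimum of $|\sqrt{P^{1-\alpha}}\, g_0\, \Delta q_0 + g_1\, \Delta q_1|$ over all nonzero integer pairs with $|\Delta q_0|\leq 2Q_{\max}$ and $|\Delta q_1|\leq 4Q_{\max}$. Therefore, with $\beta\defeq\delta P^{-(3\alpha/2-1)/2}$, the event $\{d_{\min}(g_0,g_1)<\beta\}$ is contained in the event $(g_0,g_1)\in B$, where $B$ is the set of Lemma~\ref{lm:NMb2} instantiated with $A_0=\sqrt{P^{1-\alpha}}$, $A_1=1$, $\tau=8$ (admissible since $g_0=h_{11}h_{23}h_{12}\in(1,8]$ and $g_1=h_{12}h_{21}h_{13}\in(1,8]$), $Q_0=2Q_{\max}$, $Q_1=4Q_{\max}$ (rounded up to integers, a harmless change since it only enlarges $B$). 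The hypotheses of the lemma hold: $\sqrt{P^{1-\alpha}}\in\Zc^+$ by the integrality recalled after \eqref{eq:rewritteny231}, $A_1,\tau,Q_0,Q_1\in\Zc^+$, and $\beta\in(0,1]$ since $3\alpha/2-1\geq 0$ on $[2/3,1]$ and $\delta\leq 1$.

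Next I would invoke Lemma~\ref{lm:NMb2} and specialize the minimum to the single term $Q_1\tau/A_0$. Since $Q_{\max}=P^{(\alpha/2-\epsilon)/2}$ and $\sqrt{P^{1-\alpha}}=P^{(1-\alpha)/2}$, this gives $\Lc(B)\leq 8(\tau-1)\beta\cdot\frac{Q_1\tau}{A_0}= 8\cdot 7\cdot\beta\cdot\frac{32\,Q_{\max}}{\sqrt{P^{1-\alpha}}} = 1792\,\delta\, P^{-(3\alpha/2-1)/2}\,P^{(\alpha/2-\epsilon)/2}\,P^{-(1-\alpha)/2}$, and the exponent count $-(3\alpha/2-1)/2+(\alpha/2-\epsilon)/2-(1-\alpha)/2=-\epsilon/2$ collapses this to $\Lc(B)\leq 1792\,\delta\,P^{-\epsilon/2}$.

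Finally — and this is the step I expect to require the most care — I would transfer this bound to the channel-parameter space, i.e.\ bound $\Lc(\Ho)$ where $\Ho\defeq\{(h_{k\ell})\in(1,2]^{2\times 3}: (g_0(h),g_1(h))\in B\}$. The subtlety is that $g_0$ and $g_1$ are products of the $h_{k\ell}$ sharing the factor $h_{12}$, while $h_{22}$ is absent. I would integrate out $h_{22}$ (a unit factor), then, for each fixed $h_{12}$, apply Fubini with the substitutions $a=h_{11}h_{23}$ and $b=h_{21}h_{13}$ followed by $(g_0,g_1)=(h_{12}a,h_{12}b)$; because every $h_{k\ell}>1$, each one-dimensional Jacobian integral $\int_1^2 dh/h$ and $\int_1^2 dh/h^2$ is at most $1$, so the preimage of $B$ has measure at most $\Lc(B)$. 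Hence $\Lc(\Ho)\leq\Lc(B)\leq 1792\,\delta\,P^{-\epsilon/2}$, which is \eqref{eq:LebmeaB}; and for $(h_{k\ell})\notin\Ho$ we have $(g_0,g_1)\notin B$, so by the reduction in the second paragraph $d_{\min}(g_0,g_1)\geq\beta=\delta P^{-(3\alpha/2-1)/2}$, which is \eqref{eq:distancegeq}. The principal obstacle is precisely this change of variables: one must verify that mapping from the $(g_0,g_1)$ virtual coordinates of Lemma~\ref{lm:NMb2} back to the genuine channel coefficients does not inflate the Lebesgue measure, which holds exactly because all the $h_{k\ell}$ exceed $1$.
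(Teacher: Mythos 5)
Your proof is correct and follows essentially the same route as the paper: you instantiate Lemma~\ref{lm:NMb2} with the identical constants $A_0=\sqrt{P^{1-\alpha}}$, $A_1=1$, $\tau=8$, $Q_0=2Q_{\max}$, $Q_1=4Q_{\max}$, specialize the min to the $Q_1\tau/A_0$ term, and transfer the bound on $\Lc(B)$ to $\Lc(\Ho)$ via the same Jacobian factor $1/(h_{11}h_{12}^2 h_{21})\le 1$. The only (useful) refinements you add are explicitly checking the hypotheses $\beta\in(0,1]$, $\sqrt{P^{1-\alpha}}\in\Zc^{+}$, and rounding $Q_0,Q_1$ up to integers, which the paper leaves implicit.
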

 \begin{proof}
 Consider the case of   $\alpha \in [2/3, 1]$. Let us set
\[ \beta \defeq    \delta P^{- \frac{3\alpha/2 -1 }{2}} , \quad A_0 \defeq   \sqrt{P^{ 1- \alpha }} ,  \quad  A_1 \defeq  1 \]
\[ Q_0 \defeq 2 Q_{\max},    \quad Q_1\defeq 4 Q_{\max},  \quad  Q_{\max} \defeq P^{ \frac{ \alpha/2 - \epsilon }{2}}       \]
for some $\epsilon >0$ and $\delta \in (0, 1]$.  Recall that $g_0\defeq  h_{11}    h_{23} h_{12}$,  $g_1\defeq   h_{12}h_{21} h_{13}$. Let $\tau \defeq 8$.
 Define the event
\begin{align}
  B(q_0, q_1)  \defeq \{ (g_0, g_1)  \in (1, \tau]^2 :  |  A_0 g_0 q_0 + A_1 g_1 q_1  | < \beta \}    \non
\end{align}
and set 
\begin{align}
B  \defeq   \bigcup_{\substack{ q_0, q_1 \in \Zc:  \\  (q_0, q_1) \neq  0,  \\  |q_k| \leq Q_k  \ \forall k }}  B(q_0, q_1) .    \non
\end{align}
We  now bound  the Lebesgue measure of $B $  by using the result of Lemma~\ref{lm:NMb2} (see Section~\ref{sec:example}), given as 
\begin{align}
\Lc (B )  & \leq   8 (\tau -1) \beta \min\{ \frac{ 8 {Q^2}_{\max}}{ 1},  \frac{ 8 {Q^2}_{\max}}{  \sqrt{P^{ 1- \alpha }}},   \frac{2 Q_{\max} \tau}{1},   \frac{4 Q_{\max} \tau}{ \sqrt{P^{ 1- \alpha }}}\} \non\\
& \leq   8 (\tau -1)\beta \cdot Q_{\max} \cdot \min\{8Q_{\max},  \frac{ 8Q_{\max}}{\sqrt{P^{ 1- \alpha }}},   2 \tau,   \frac{ 4\tau}{ \sqrt{P^{ 1- \alpha }}}\} \non\\ 
& \leq   8 (\tau -1)\beta \cdot Q_{\max} \cdot P^{\frac{ \alpha -1 }{2}}  \cdot \min\{8Q_{\max},   4\tau \} \non\\ 
& \leq   8 (\tau -1)\beta \cdot Q_{\max} \cdot P^{\frac{ \alpha -1 }{2}}  \cdot    4\tau  \non\\ 
& = 32 \tau (\tau -1) \beta \cdot  P^{\frac{3 \alpha/2 -1 -  \epsilon }{2}} \non \\
& = 32 \tau (\tau -1) \delta   \cdot     P^{ - \frac{ \epsilon  }{2}}  \non\\
& = 1792 \delta   \cdot     P^{ - \frac{ \epsilon  }{2}}.    \label{eq:LeBmeasure}
\end{align}
From our definition,  the set  $B$ is the collection of  $(g_0, g_1)$, where $(g_0, g_1)  \in (1,\tau]^2$.   For any $(g_0, g_1) \in B$,  there exists at least one pair  $(q_0, q_1) \in \{ q_0, q_1:  q_0, q_1 \in \Zc, (q_0, q_1) \neq  0,  |q_0| \leq Q_0 , |q_1| \leq Q_1 \}$ such that $| A_0  g_0q_0 + A_1 g_1 q_1  | <   \delta P^{- \frac{3\alpha/2 -1 }{2}}$. Thus, we can consider the set  $B$ as an  outage set. 
For any pair $(g_0, g_1)$  that is outside  the outage set $B$, we have the following conclusion: 
  \[d_{\min} (g_0, g_1)   \geq    \delta P^{- \frac{3\alpha/2 -1 }{2}} , \quad \text{for}  \quad  (g_0, g_1)\notin B .\]

Let us now define  $\Ho$  as a set of the  $(h_{11}, h_{21},  h_{12}, h_{22}, h_{13}, h_{23} ) \in (1, 2]^{2\times 3}$   such that the corresponding pairs $(g_0, g_1)$ are in the outage set $B$, i.e., 
\[ \Ho \defeq \{  (h_{11}, h_{21}, h_{12}, h_{22},  h_{13}, h_{23} ) \in (1, 2]^{2\times 3} :      (g_0, g_1) \in B  \} . \]
At this point, with the connection between  $\Ho$ and $B$, we can  bound the Lebesgue measure of $\Ho$  as
\begin{align}
 \Lc (\Ho ) &\!  = \!  \int_{h_{11}=1}^2      \!  \int_{h_{21}=1}^2    \!   \int_{h_{12}=1}^2   \!   \int_{h_{22}=1}^2    \!  \int_{h_{13}=1}^2    \!  \int_{h_{23}=1}^2  \! \mathbbm{1}_{\Ho}   \!  (h_{11}, \!   h_{21},\!   h_{12}, \!  h_{22},  \!  h_{13}, \!   h_{23} )    d h_{23}  d h_{13}    d h_{22}   d h_{12}   d h_{21}     d h_{11}           \label{eq:LeBmeasure882441}\\  
& =    \int_{h_{11}=1}^2   \int_{h_{21}=1}^2  \int_{h_{12}=1}^2 \int_{h_{22}=1}^2    \int_{h_{13}=1}^2  \int_{h_{23}=1}^2     \mathbbm{1}_{B}  (h_{11}    h_{23} h_{12}, h_{12}h_{21} h_{13}) d h_{23} d h_{13} d h_{22}  d h_{12} d h_{21}  d h_{11}           \non\\ 
& \leq   \int_{h_{11}=1}^2   \int_{h_{21}=1}^2  \int_{h_{12}=1}^2 \int_{h_{22}=1}^2      \int_{g_{1}=1}^{\tau}   \int_{g_{0}=1}^{\tau}    \mathbbm{1}_{B}  (g_0, g_{1})  \cdot  \frac{1}{h_{11}{h^2}_{12}h_{21}} d g_{0} d g_{1} d h_{22}  d h_{12} d h_{21}  d h_{11}                             \non\\ 
& \leq \int_{h_{11}=1}^2   \int_{h_{21}=1}^2  \int_{h_{12}=1}^2 \int_{h_{22}=1}^2     \int_{g_{1}=1}^{\tau}   \int_{g_{0}=1}^{\tau}    \mathbbm{1}_{B}  (g_0, g_{1})  d g_{0} d g_{1} d h_{22}  d h_{12} d h_{21}  d h_{11}             \non \\
& =     \int_{h_{11}=1}^2   \int_{h_{21}=1}^2  \int_{h_{12}=1}^2 \int_{h_{22}=1}^2      \mathcal{L}(B)  d h_{22}  d h_{12} d h_{21}  d h_{11}                   \non\\ 
& \leq    \int_{h_{11}=1}^2   \int_{h_{21}=1}^2  \int_{h_{12}=1}^2 \int_{h_{22}=1}^2      1792\delta   \cdot     P^{ - \frac{ \epsilon  }{2}}    d h_{22}  d h_{12} d h_{21}  d h_{11}           \label{eq:LeBmeasure0993}  \\     
& =    1792\delta   \cdot     P^{ - \frac{ \epsilon  }{2}}               \label{eq:LeBmeasure9999}  
\end{align}
where  \eqref{eq:LeBmeasure0993} follows from the result in \eqref{eq:LeBmeasure};
in the above derivations we use the following definitions
\begin{subnumcases}
{ \mathbbm{1}_{\Ho}  (h_{11}, h_{21}, h_{12}, h_{22},  h_{13}, h_{23}) 
 =} 
    1  &     if   \ $(h_{11}, h_{21}, h_{12}, h_{22},  h_{13}, h_{23} ) \in  \Ho$            			\non \\
     0  &  if    \  $(h_{11}, h_{21}, h_{12}, h_{22},  h_{13}, h_{23} ) \notin  \Ho$ \non
\end{subnumcases}
and 
\begin{subnumcases}
{ \mathbbm{1}_{B}  (g_0, g_1) 
 =} 
    1  &     if   \ $(g_0, g_1) \in  B$            			\non \\
     0  &  if    \  $(g_0, g_1) \notin  B $ . \non
\end{subnumcases} 
Then, we complete  the proof of  Lemma~\ref{lm:distance3432}.
\end{proof}

 The result of Lemma~\ref{lm:distance3432} reveals  that,  the minimum distance $d_{\min}$  is sufficiently large
for all  the channel  coefficients $\{h_{k\ell}\} \in (1, 2]^{2\times 3} $ except for a bounded set   $\Ho \subseteq (1,2]^{2\times 3}$ whose Lebesgue measure  satisfies  \[ \mathcal{L}(\Ho)  \to 0, \quad \text {as}\quad  P\to \infty .\] 
In the following, we will consider the channel  coefficients $\{h_{k\ell}\} \in (1, 2]^{2\times 3} $ that are not in the outage set   $\Ho$. With this channel condition, the minimum distance $d_{\min}$   is bounded by  \[d_{\min}    \geq  \delta P^{- \frac{3\alpha/2 -1 }{2}}. \]  

Let us now  go back to the expression of $y_1$ in  \eqref{eq:rewritteny231}, which can also be described as
\begin{align}
 y_1    =  &  P^{{ \frac{ \alpha/2 + \epsilon }{2}}}  \cdot 2\gamma  \underbrace{(\sqrt{P^{ 1- \alpha }}  g_0 q_0 + g_1 q_1 )}_{ \defeq x_{s}  }      \non\\
&  +   \sqrt{P^{ 1 - \alpha}}  \bigl(\underbrace{ h_{11}    h_{23} h_{12}  v_{1,p}   +    \frac{1}{\sqrt{P^{ 1 - \alpha}}}h_{12}h_{21} h_{13} v_{2,p}  \bigr)}_{\defeq  \tilde{g} }   +  z_{1}   \non\\
 =   & P^{{ \frac{ \alpha/2 + \epsilon }{2}}}  \cdot 2\gamma \cdot x_{s}   +  \sqrt{P^{ 1 - \alpha}} \tilde{g}  +  z_{1}     \label{eq:yk182375}  
\end{align}
where 
\[x_{s}  \defeq   \sqrt{P^{ 1- \alpha }}  g_0 q_0 + g_1 q_1 \] 
and $\tilde{g} \defeq h_{11}    h_{23} h_{12}  v_{1,p}   +    \frac{1}{\sqrt{P^{ 1 - \alpha}}}h_{12}h_{21} h_{13} v_{2,p}   $, with  
\[ |\tilde{g} | \leq  \tilde{g}_{\max} \defeq \sqrt{2}  \quad  \forall   \tilde{g} \]
In the first step we will  decode the sum $x_{s}  =   \sqrt{P^{ 1- \alpha }}  g_0 q_0 + g_1 q_1 $ 
from $y_1$ by treating other signals as noise. This step is called as noise removal.  After correctly decoding  $x_{s} $, we can recover $q_0$  and $q_1$ from $x_{s} $  because $g_0$ and $g_1$ are rationally independent. This step is called as signal separation (cf.~\cite{MGMK:14}). 
Given that  the  minimum distance for $x_{s}$ is  $d_{\min}$ defined in \eqref{eq:minidis111},  Lemma~\ref{lm:distance3432} reveals that this minimum distance is bounded by  $d_{\min}    \geq   \delta P^{- \frac{3\alpha/2 -1 }{2}} $ when the channels are not in the outage set, that is,  $\{h_{k\ell}\} \notin \Ho$. Define $\hat{x}_s$ as the estimate for $ x_s$ from $y_1$ by choosing the point close to $ x_s$.
Then,  the error  probability for decoding $x_{s}$ from $y_1$  in \eqref{eq:yk182375} is 
 \begin{align}
    \text{Pr} [ x_s \neq \hat{x}_s ]  
   \leq  &  \text{Pr} \Bigl[   | z_1  + P^{ \frac{1 - \alpha}{2}} \tilde{g}|  >    P^{{ \frac{ \alpha/2 + \epsilon }{2}}}  \cdot 2\gamma   \cdot \frac{d_{\text{min}} }{2}  \Bigr]   \non \\
   \leq  &     \text{Pr} [  z_1   >  P^{{ \frac{ \alpha/2 + \epsilon }{2}}}  \cdot 2\gamma   \cdot \frac{d_{\text{min}} }{2}   -  P^{ \frac{1 - \alpha}{2}} \tilde{g}_{\max}   ]   \non\\&+  \text{Pr} [  z_1   >  P^{{ \frac{ \alpha/2 + \epsilon }{2}}}  \cdot 2\gamma   \cdot \frac{d_{\text{min}} }{2}   -  P^{ \frac{1 - \alpha}{2}} \tilde{g}_{\max}   ]  \label{eq:error2256cc} \\
    =    & 2  \cdot     {\bf{Q}} \bigl(  P^{{ \frac{ \alpha/2 + \epsilon }{2}}}  \cdot 2\gamma   \cdot \frac{d_{\text{min}} }{2}   -  P^{ \frac{1 - \alpha}{2}} \tilde{g}_{\max}  \bigr)    \non   \\ 
     \leq  &  2  \cdot     {\bf{Q}} \bigl(  P^{ \frac{1 - \alpha}{2}} ( \gamma \delta  P^{ \frac{ \epsilon}{2}}   -  \sqrt{2})  \bigr)     \label{eq:error9982cc} 
  \end{align}
  where the ${\bf{Q}}$-function is defined as ${\bf{Q}}(a )  \defeq  \frac{1}{\sqrt{2\pi}} \int_{a}^{\infty}  \exp( -\frac{ s^2}{2} ) d s$;
\eqref{eq:error2256cc} follows from that   $ |\tilde{g} | \leq  \tilde{g}_{\max} \defeq \sqrt{2},    \forall   \tilde{g}$;
\eqref{eq:error9982cc} is from  that $d_{\min}    \geq    \delta P^{- \frac{3\alpha/2 -1 }{2}}  $.
By using the identity that  $ {\bf{Q}} (a ) \leq   \frac{1}{2}\exp ( -  a^2 /2 ),  \    \forall a \geq 0$,  and together with \eqref{eq:error9982cc}, we can  have the following bound  
  \begin{align}
  \text{Pr} [ x_s \neq \hat{x}_s ]   &\leq    \exp \Bigl(    -  \frac{ P^{1 - \alpha}  \bigl(  \gamma \delta  P^{ \frac{ \epsilon}{2}}   -  \sqrt{2} \bigr)^2 }{2} \Bigr)  \non
   \end{align}
 when  $   \gamma \delta  P^{ \frac{ \epsilon}{2}}   -  \sqrt{2} \geq 0 $.   
Therefore, when $P\to \infty$, it implies that  $   \gamma \delta  P^{ \frac{ \epsilon}{2}}   -  \sqrt{2} \geq 0 $, which then gives the following conclusion on the  error probability for decoding $x_{s}$ from $y_1$ 
 \begin{align}
 \text{Pr} [ x_s \neq \hat{x}_s ] \to 0  \quad  \text{as} \quad   P\to \infty.    \label{eq:error885256}                           
 \end{align}
After decoding   $x_{s}  =   \sqrt{P^{ 1- \alpha }}  g_0 q_0 + g_1 q_1 $ correctly from $y_1$,  we can recover the symbols $  q_0   $ and  $q_1  $  because  $g_0$ and $g_1$ are rationally independent.

In the second step,  the decoded $x_{s}$ can be removed from $y_1$  and then  $v_{1,p}$ can be decoded from 
\begin{align}
  y_1  - P^{{ \frac{ \alpha/2 + \epsilon }{2}}}  \cdot 2\gamma \cdot x_{s} =   \sqrt{P^{ 1 - \alpha}} h_{11}    h_{23} h_{12}  v_{1,p}  +    h_{12}h_{21} h_{13} v_{2,p}   +  z_{1} .    \non
\end{align}
Given that   $v_{1,p}, v_{2,p} \in    \Omega (\xi   =\gamma \cdot \frac{ 1}{Q},   \   Q = P^{ \frac{ 1 - \alpha - \epsilon}{2}} )$  and $h_{12}h_{21} h_{13} v_{2,p} \leq  \frac{1}{\sqrt{2}}$,  the result of Lemma~\ref{lm:AWGNic}  reveals that  the error probability for decoding $v_{1,p}$  is 
 \begin{align}
  \text{Pr} [ v_{1,p} \neq \hat{v}_{1,p} ]     \to 0  \quad  \text{as} \quad   P\to \infty . \label{eq:error155525}                           
 \end{align}
 
At this point, it holds true that  the error probability of estimating  $\{v_{1,c}, v_{1,p} \}$ from $y_1$  is
 \begin{align}
 \text{Pr} [  \{ v_{1,c} \neq \hat{v}_{1,c} \} \cup  \{ v_{1,p} \neq \hat{v}_{1,p} \}  ]  \to 0         \quad \text {as}\quad  P\to \infty    \label{eq:error1c1p0595}
 \end{align}
 for  all the channel coefficients $\{h_{k\ell}\} \in (1, 2]^{2\times 3} \setminus \Ho$,  with Lebesgue measure  $\mathcal{L}(\Ho)$ satisfying  \[ \mathcal{L}(\Ho)  \to 0, \quad \text {as}\quad  P\to \infty .\] 
Due to the symmetry, the case of $k=2$ can be proved in a similar way.

\section{Converse }   \label{sec:converse}
In this section, we provide the converse proof of the secure sum GDoF in Theorem~\ref{thm:IChGDoF}, focusing on the two-user symmetric  Gaussian interference channel with a helper defined in Section~\ref{sec:system}. 
The converse proof is based on the following two steps: 

1)  The secure capacity region of  the interference channel with a helper and with  secrecy constraints, denoted by $C$,  is outer bounded by the capacity region of the interference channel with a  helper but without  secrecy constraints (denoted by $C_{h, ns}$),  i.e., 
\[  C \subseteq  C_{h, ns}\]
due to the fact that  secrecy constraints will not enlarge the capacity region.

 2) The capacity region of the interference channel with a helper is outer bounded by the capacity region of the interference channel without a helper (denoted by $C_{nh, ns}$), i.e., 
\[  C_{h, ns} \subseteq   C_{nh, ns}\]
because the helper's  signal that is independent of the other transmitters' signals, will not enlarge the capacity region for this setting without secrecy constraints.

Therefore, it implies that 
\[C \subseteq C_{nh, ns}. \]  
In other words,  the capacity region $C_{nh, ns}$ (and the GDoF region respectively) of the interference channel without a helper and without secrecy constraints (see \cite{ETW:08}),    will sever as the outer bound of the secure capacity region $C$ (and the secure GDoF region respectively) of the interference channel with a helper and with secrecy constraints.

\section{Conclusion}   \label{sec:conclusion}

For the two-user symmetric Gaussian interference channel, this work revealed an interesting observation that   adding a helper can  \emph{totally} remove the secrecy constraints, in terms of GDoF performance.
In the proposed scheme,  the cooperative jamming and a careful signal design are used such that  the jamming signal of  the helper  is aligned at a specific direction and power level with the information signals of the transmitters. It turns out that,  the penalty in GDoF due to the secrecy constraints can be totally removed.  In the rate analysis, the estimation approaches of noise removal and signal separation due to the rational independence  are used. 
In the future work, we will extend the result to the other communication channels with a helper.

\appendices

\section{Proof of Lemma~\ref{lm:rateerror322}  \label{sec:rateerror322} }

In this section we will provide the proof of Lemma~\ref{lm:rateerror322}. 
We first focus on the proof for the first user ($k=1$). 
In the following, $v_{2,c} + u_3$ and  $v_{1,c}$ can be estimated from $y_1$ by using a successive decoding method with vanishing error probability,  where $y_1$ is expressed in \eqref{eq:yvk122}.
At first,   $v_{2,c} + u_3 \in   2\cdot \Omega ( \xi =  \gamma_{v_{2,c}} \cdot \frac{ 1}{Q} ,   \   Q =  P^{ \frac{1  - \epsilon }{2}} )$  can be estimated from $y_1$  by treating the other signals as noise, where  $y_1$ in \eqref{eq:yvk122} can be described as 
\begin{align}
 y_1  =    \sqrt{P^{ \alpha }} h_{12}h_{21} h_{13}  (   v_{2,c}  +  u_3)     +  \sqrt{P} h_{11}    h_{23} h_{12} v_{1,c}   +  z_{1}  .  \label{eq:yvk101322}
\end{align}
In this  case with $ \alpha \geq  2$, we have $ | h_{11}    h_{23} h_{12} v_{1,c} | \leq   \sqrt{2}$ for any realizations of $v_{1,c}$.
Let  $\hat{s}_{vu}$ be the estimate of $s_{vu} \defeq v_{2,c} + u_3$. From Lemma~\ref{lm:AWGNic} it is true that   
 \[\text{Pr} [s_{vu}  \neq  \hat{s}_{vu}] \to 0, \quad \text {as}\quad  P\to \infty.\]

Once   $v_{2,c} + u_3$ is decoded, it can be removed from $y_1$. After that,  $v_{1,c} \in   \Omega ( \xi =  \gamma_{v_{1,c}} \cdot \frac{ 1}{Q} ,   \   Q =  P^{ \frac{1  - \epsilon }{2}} )$  can be estimated   from  the following observation  \[y_1-    \sqrt{P^{ \alpha }} h_{12}h_{21} h_{13}  (   v_{2,c}  +  u_3) =   \sqrt{P} h_{11}    h_{23} h_{12} v_{1,c}   +  z_{1}    \]  with vanishing  error probability, i.e.,   
  \begin{align}
 \text{Pr} [ v_{1,c} \neq  \hat{v}_{1,c}] &   \to 0    \quad \text {as }\quad  P \to \infty   \label{eq:error2950322}
\end{align}
which completes the proof for the case of $k=1$. Similarly,  the case of $k=2$ can be proved due to the symmetry.

\section{Proof of Lemma~\ref{lm:rateerror132}  \label{sec:rateerror132} }

In this section we will provide the proof of Lemma~\ref{lm:rateerror132}, which is similar to that of Lemma~\ref{lm:rateerror341}.
In this case with $\alpha \in [1, 2]$, the two symbols 
\begin{align}
v_{1,c} &\in \Omega ( \xi =  2\gamma \cdot \frac{ 1}{Q} ,   Q =  P^{ \frac{ \alpha/2 - \epsilon }{2}} ) \non \\
 v_{2,c}  +  u_3 &\in 2 \cdot \Omega ( \xi =  2\gamma \cdot \frac{ 1}{Q},   \   Q =  P^{ \frac{ \alpha/2 - \epsilon }{2}} ) \non
\end{align}
will be estimated from $y_1$   by  using the approaches of noise removal and signal separation.  The expression of $y_1$ is given in \eqref{eq:yvk122}, which can be rewritten as 
\begin{align}
y_{1} &=    \sqrt{P} h_{11}    h_{23} h_{12} v_{1,c}   +    \sqrt{P^{ \alpha }} h_{12}h_{21} h_{13}  (   v_{2,c}  +  u_3)    +  z_{1}  \non   \\
         &=    \sqrt{P^{ 1-  \alpha/2 +  \epsilon}} \cdot 2\gamma \cdot ( \bar{g}_0 \bar{q}_0 + \sqrt{P^{ \alpha -1 }}\bar{g}_1 \bar{q}_1 )   +  z_{1}    \non\\
         &=    \sqrt{P^{ 1-  \alpha/2 +  \epsilon}}  \cdot 2\gamma \cdot \bar{x}_s   +  z_{1}    \label{eq:yvk4835132}
\end{align}
where  $\gamma  \in \bigl(0, \frac{1}{8\sqrt{2}}\bigr]$ ,  $\bar{x}_s  \defeq ( \bar{g}_0 \bar{q}_0 + \sqrt{P^{ \alpha -1 }}\bar{g}_1 \bar{q}_1 )$, $\bar{g}_0\defeq  h_{11}    h_{23} h_{12} $,  $\bar{g}_1\defeq    h_{12}h_{21} h_{13} $   and 
    \[ \bar{q}_0  \defeq  \frac{Q_{\max}}{2\gamma} \cdot   v_{1,c}  ,    \quad  \bar{q}_1  \defeq  \frac{Q_{\max}}{2\gamma} \cdot   (   v_{2,c}  +  u_3),    \quad  Q_{\max} \defeq P^{ \frac{ \alpha/2 - \epsilon }{2}} . \]
In this setting,  $\bar{q}_0, \bar{q}_1 \in \Zc$,   $|\bar{q}_0| \leq Q_{\max}$, $|\bar{q}_1| \leq 2 Q_{\max}$,   $ \sqrt{P^{ \alpha -1 }} \in  \Zc^+$ based on  our definitions.
Let us define  the minimum distance for $\bar{x}_s$ as
  \begin{align}
 \bar{d}_{\min}  (\bar{g}_0, \bar{g}_1)   \defeq    \min_{\substack{ \bar{q}_0, \bar{q}_0', \in \Zc  \cap [- Q_{\max},    Q_{\max}]  \\  \bar{q}_1, \bar{q}_1' \in \Zc  \cap [- 2Q_{\max},   2 Q_{\max}]   \\  (\bar{q}_0, \bar{q}_1) \neq  (\bar{q}_0', \bar{q}_1'  )}}  | \bar{g}_0  (\bar{q}_0 - \bar{q}_0') + \sqrt{P^{ \alpha - 1}} \bar{g}_1 (\bar{q}_1 - \bar{q}_1')  |.   \label{eq:minidis111132}
 \end{align}
A lower bound on the minimum distance $\bar{d}_{\min}$ is given in the following  lemma.

\begin{lemma}  \label{lm:distance132}
 Consider the signal design in \eqref{eq:constellationGsym1}-\eqref{eq:constellationGsym1u} and \eqref{eq:para13211}-\eqref{eq:xvkkk1132} for the case with $\alpha \in [1, 2]$. Let $\delta \in (0, 1]$ and  $\epsilon >0$.    The minimum distance $\bar{d}_{\min}$ defined in \eqref{eq:minidis111132} is bounded by
 \begin{align}
d_{\min}    \geq   \delta P^{- \frac{1- \alpha/2 }{2}}   \label{eq:distancegeq132}
 \end{align}
for all  the channel  coefficients $\{h_{k\ell}\} \in (1, 2]^{2\times 3} \setminus \Hob$, where $\Hob \subseteq (1,2]^{2\times 3}$ is an outage set  and the  the Lebesgue measure of the this outage set, denoted by $\mathcal{L}(\Hob)$,  satisfies  
 \begin{align}
\mathcal{L}(\Hob) \leq  1792\delta   \cdot     P^{ - \frac{ \epsilon  }{2}}  .   \label{eq:LebmeaB132}
 \end{align}
 \end{lemma}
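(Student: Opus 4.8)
The plan is to prove Lemma~\ref{lm:distance132} by imitating the proof of Lemma~\ref{lm:distance3432} almost verbatim; the only structural change is that the integer weight $\sqrt{P^{\alpha-1}}$ now multiplies $\bar{g}_1$ rather than $\bar{g}_0$, so the two coordinates in Lemma~\ref{lm:NMb2} swap roles. First I would reduce \eqref{eq:minidis111132} to the form handled by Lemma~\ref{lm:NMb2}: writing $q_0 \defeq \bar{q}_0 - \bar{q}_0' \in \Zc \cap [-2Q_{\max},\,2Q_{\max}]$ and $q_1 \defeq \bar{q}_1 - \bar{q}_1' \in \Zc \cap [-4Q_{\max},\,4Q_{\max}]$ with $(q_0,q_1)\neq 0$, one has $\bar{d}_{\min}(\bar{g}_0,\bar{g}_1) = \min |\,\bar{g}_0 q_0 + \sqrt{P^{\alpha-1}}\,\bar{g}_1 q_1\,|$.

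Next I would apply Lemma~\ref{lm:NMb2} with $A_0 \defeq 1$, $A_1 \defeq \sqrt{P^{\alpha-1}}$, $Q_0 \defeq 2Q_{\max}$, $Q_1 \defeq 4Q_{\max}$, $Q_{\max}\defeq P^{(\alpha/2-\epsilon)/2}$, $\tau \defeq 8$, $\beta \defeq \delta P^{-(1-\alpha/2)/2}$, and $g_0 = \bar{g}_0 = h_{11}h_{23}h_{12}$, $g_1 = \bar{g}_1 = h_{12}h_{21}h_{13}$, letting $B$ and $B(q_0,q_1)$ be the sets of \eqref{eq:lemmabound0098}--\eqref{eq:lemmabound125} for these parameters. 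The hypotheses are checked from $1 \le \alpha \le 2$: $\alpha \le 2$ gives $1-\alpha/2 \ge 0$, hence $\beta \in (0,1]$; $\alpha \ge 1$ together with $\sqrt{P^{\alpha_{k\ell}}} = 2^{m_{k\ell}}$ and $\alpha_{kk}=1$ makes $A_1 = \sqrt{P^{\alpha-1}}$ a positive integer (as already noted after \eqref{eq:yvk4835132}); and $g_0, g_1 \in (1,8] = (1,\tau]$ since each is a product of three coefficients in $(1,2]$. Lemma~\ref{lm:NMb2} then gives $\Lc(B) \le 8(\tau-1)\beta\min\{Q_1Q_0/A_1,\ Q_0Q_1/A_0,\ Q_0\tau/A_1,\ Q_1\tau/A_0\}$; retaining $Q_0\tau/A_1 = 2Q_{\max}\tau\,P^{-(\alpha-1)/2}$ and using $P^{-(\alpha-1)/2}\le 1$ (which holds because $\alpha\ge 1$, the direction opposite to that used in Lemma~\ref{lm:distance3432}), the exponents $-(1-\alpha/2)/2$ from $\beta$, $(\alpha/2-\epsilon)/2$ from $Q_{\max}$, and $-(\alpha-1)/2$ combine to exactly $-\epsilon/2$, so $\Lc(B) \le 1792\,\delta\,P^{-\epsilon/2}$ (in fact $896\,\delta\,P^{-\epsilon/2}$ already suffices).

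I would then transfer this bound from $(\bar{g}_0,\bar{g}_1)$-space to $(h_{k\ell})$-space exactly as in \eqref{eq:LeBmeasure882441}--\eqref{eq:LeBmeasure9999}: define $\Hob \defeq \{(h_{11},h_{21},h_{12},h_{22},h_{13},h_{23})\in(1,2]^{2\times3} : (\bar{g}_0,\bar{g}_1)\in B\}$, hold $h_{11},h_{21},h_{12},h_{22}$ fixed, and change variables $(h_{13},h_{23})\mapsto(\bar{g}_0,\bar{g}_1)=(h_{11}h_{12}h_{23},\,h_{12}h_{21}h_{13})$, whose Jacobian has modulus $h_{11}h_{12}^2h_{21} > 1$, so $dh_{13}\,dh_{23} \le d\bar{g}_0\,d\bar{g}_1$ and the image lies in $(1,\tau]^2$; integrating $\mathbbm{1}_B$ over $(1,\tau]^2$ returns $\Lc(B)$ while integrating the remaining four coordinates over the unit-length intervals $(1,2]$ contributes $1$, whence $\Lc(\Hob) \le \Lc(B) \le 1792\,\delta\,P^{-\epsilon/2}$, which is \eqref{eq:LebmeaB132}. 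Finally, for any $(h_{k\ell})\notin\Hob$ the pair $(\bar{g}_0,\bar{g}_1)$ lies outside $B$, so $|\bar{g}_0 q_0 + \sqrt{P^{\alpha-1}}\bar{g}_1 q_1| \ge \beta$ for every admissible nonzero $(q_0,q_1)$, i.e.\ $\bar{d}_{\min} \ge \delta P^{-(1-\alpha/2)/2}$, which is \eqref{eq:distancegeq132}.

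I do not anticipate any real obstacle, since the argument is a direct transcription of the already-proved Lemma~\ref{lm:distance3432}. The points requiring care are: confirming $\sqrt{P^{\alpha-1}}\in\Zc^+$, which is exactly where $\alpha\ge 1$ enters (the companion lemma instead needed $\sqrt{P^{1-\alpha}}\in\Zc^+$, hence $\alpha\le 1$); picking the correct term inside the $\min$ of Lemma~\ref{lm:NMb2} --- now the one divided by $A_1=\sqrt{P^{\alpha-1}}$ --- and using $P^{-(\alpha-1)/2}\le 1$ in the right direction; and tracking the exponent bookkeeping so that $\beta$, $Q_{\max}$, and $P^{-(\alpha-1)/2}$ collapse to exactly the factor $P^{-\epsilon/2}$ appearing in \eqref{eq:LebmeaB132}.
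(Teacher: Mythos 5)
Your proposal is correct and follows essentially the same route as the paper: the same choices $A_0=1$, $A_1=\sqrt{P^{\alpha-1}}$, $Q_0=2Q_{\max}$, $Q_1=4Q_{\max}$, $\tau=8$, $\beta=\delta P^{-(1-\alpha/2)/2}$ fed into Lemma~\ref{lm:NMb2}, the same exponent bookkeeping collapsing to $P^{-\epsilon/2}$, and the same change of variables from $(h_{13},h_{23})$ to $(\bar g_0,\bar g_1)$ with Jacobian $h_{11}h_{12}^2h_{21}>1$ to transfer the measure bound. Your constant $896$ is indeed slightly sharper than the paper's $1792$, which loses an extra factor of two by loosening $2\tau$ to $4\tau$ inside the $\min$; both are of course valid since the lemma only asserts an upper bound.
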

 \begin{proof}
The proof of this lemma is  similar to that of the Lemma~\ref{lm:distance3432}.
Considering the case  of  $\alpha \in [1, 2]$,  let  us set
\[ \bar{\beta} \defeq   \delta P^{- \frac{1- \alpha/2 }{2}} , \quad \bar{A}_0 \defeq  1,  \quad  \bar{A}_1 \defeq    \sqrt{P^{  \alpha -1 }} \]
\[ Q_0 \defeq 2 Q_{\max},    \quad Q_1\defeq 4 Q_{\max},  \quad  Q_{\max} \defeq P^{ \frac{ \alpha/2 - \epsilon }{2}}       \]
for some $\epsilon >0$ and $\delta \in (0, 1]$. Recall that $ \bar{g}_0\defeq   h_{11}    h_{23} h_{12}$ and $ \bar{g}_1\defeq    h_{12}h_{21} h_{13}$. Let  $\tau \defeq 8$. 
We define the event
\begin{align}
\bar{B}(\bar{q}_0, \bar{q}_1)  \defeq \{ (\bar{g}_0, \bar{g}_1)  \in (1, \tau]^2 :  |  \bar{A}_0 \bar{g}_0\bar{q}_0 + \bar{A}_1 \bar{g}_1 \bar{q}_1 | < \bar{ \beta} \}       \label{eq:Boutage11132}
\end{align}
and set 
\begin{align}
\bar{B}  \defeq   \bigcup_{\substack{ \bar{q}_0, \bar{q}_1 \in \Zc:  \\  (\bar{q}_0, \bar{q}_1) \neq  0,  \\  |\bar{q}_k| \leq Q_k  \ \forall k }}  B(\bar{q}_0, \bar{q}_1) .   \label{eq:Boutage22132}
\end{align}
By using Lemma~\ref{lm:NMb2},  we bound the Lebesgue measure of $\bar{B}$ as 
\begin{align}
\Lc (\bar{B})  & \leq   8 (\tau -1) \beta \min\{ \frac{ 8 {Q^2}_{\max}}{ \sqrt{P^{  \alpha -1 }}},  \frac{ 8 {Q^2}_{\max}}{  1},   \frac{2 Q_{\max} \tau}{ \sqrt{P^{  \alpha -1 }}},   \frac{4 Q_{\max} \tau}{1}\} \non\\
& \leq   8 (\tau -1)\beta \cdot Q_{\max} \cdot \min\{ \frac{ 8Q_{\max}}{ \sqrt{P^{  \alpha -1 }}}, 8Q_{\max},     \frac{ 2\tau}{  \sqrt{P^{  \alpha -1 }}}, 4 \tau  \} \non\\ 
& \leq   8 (\tau -1)\beta \cdot Q_{\max} \cdot P^{\frac{ 1- \alpha  }{2}}  \cdot \min\{8Q_{\max},   4\tau \} \non\\ 
& \leq   8 (\tau -1)\beta \cdot Q_{\max} \cdot P^{\frac{ 1- \alpha  }{2}}  \cdot    4\tau  \non\\ 
& = 32 \tau (\tau -1) \beta \cdot  P^{\frac{1-  \alpha/2  -  \epsilon }{2}} \non \\
& = 32 \tau (\tau -1) \delta   \cdot     P^{ - \frac{ \epsilon  }{2}}   \non\\
& = 1792 \delta   \cdot     P^{ - \frac{ \epsilon  }{2}}.   \label{eq:LeBmeasure132}
\end{align}
We can consider  set  $\bar{B}$ as an outage set. 
For any  pair $(\bar{g}_0, \bar{g}_1)$  that is outside the outage set $\bar{B}$, we can conclude that  $\bar{d}_{\min} (\bar{g}_0, \bar{g}_1)   \geq   \delta P^{- \frac{1- \alpha/2 }{2}}$. 
Let us now define  $\Hob$  as a set  of the  $(h_{11}, h_{21}, h_{12}, h_{22},  h_{13}, h_{23} ) \in (1, 2]^{2\times 3}$   such that the corresponding pairs $(\bar{g}_0, \bar{g}_1)$ are in the outage set $B$, that is,
\[ \Hob \defeq \{  (h_{11}, h_{21}, h_{12}, h_{22},  h_{13}, h_{23} ) \in (1, 2]^{2\times 3} \!:      (\bar{g}_0, \bar{g}_1) \in \bar{B}  \} . \]
With the connection between $\Hob$ and $\bar{B}$,  one can follow the steps in \eqref{eq:LeBmeasure882441}-\eqref{eq:LeBmeasure9999} to bound  the Lebesgue measure of $\Hob$  as: 
\begin{align}
\Lc (\Hob )  \leq  1792 \delta   \cdot     P^{ - \frac{ \epsilon  }{2}} .              \label{eq:LeBmeasure9999132}  
\end{align}
 \end{proof}

Let us now go back to the expression of $y_1$ in \eqref{eq:yvk4835132}, i.e., \[y_{1}   =     \sqrt{P^{ 1-  \alpha/2 +  \epsilon}}  \cdot 2\gamma \cdot \bar{x}_s   +  z_{1}   .\]
Note that the minimum distance for $\bar{x}_{s}$ is  $\bar{d}_{\min}$ defined in \eqref{eq:minidis111132}.  Lemma~\ref{lm:distance132} shows that this minimum distance for $\bar{x}_{s}$ is bounded by  $\bar{d}_{\min}    \geq   \delta P^{- \frac{1- \alpha/2 }{2}} $ when the channels are not in the outage set, i.e.,  $\{h_{k\ell}\} \notin \Hob$.  This  implies that  we can estimate $\bar{x}_s$ from $y_{1}$ with vanishing  error  probability as  $P\to \infty$. 
Since  $\bar{g}_0$ and $ \bar{g}_1$ are rationally independent,  we can recover  the symbols $ \bar{q}_0 $  and    $\bar{q}_1 $ from $\bar{x}_s  = ( \bar{g}_0 \bar{q}_0 + \sqrt{P^{ \alpha -1 }}\bar{g}_1 \bar{q}_1 )$ after  decoding  $\bar{x}_{s} $.
Then, the error probability for decoding $v_{1,c}$  can be concluded as  
 \begin{align}
  \text{Pr} [ v_{1,c} \neq \hat{v}_{1,c} ]     \to 0  \quad  \text{as} \quad   P\to \infty  \label{eq:error155525132}                           
 \end{align}
which  is true  for all the channel coefficients $\{h_{k\ell}\} \in (1, 2]^{2\times 3} \setminus \Ho$,   with Lebesgue measure  $\mathcal{L}(\Hob)$ satisfying  \[ \mathcal{L}(\Hob)  \to 0, \quad \text {as}\quad  P\to \infty. \] 
Similarly, the result in \eqref{eq:error155525132} can also be extended to the case of $k=2$   due to the symmetry.



\end{document}